\documentclass[aps,PRA,reprint,superscriptaddress]{revtex4-2}
\usepackage{graphicx}
\usepackage{braket}
\usepackage{amsmath}
\usepackage{amsthm}
\usepackage{amssymb}
\usepackage{dsfont}
\usepackage[colorlinks,
            linkcolor=blue,
            anchorcolor=black,
            citecolor=blue,
			urlcolor=blue]{hyperref}
			
\usepackage[ruled]{algorithm2e}
\usepackage{makecell}
\usepackage{array}
\usepackage{color}
\usepackage{multirow}

\newtheorem{proposition}{Proposition}
\newtheorem{definition}{Definition}
\newtheorem{corollary}{Corollary}

\begin{document}


\title{The four-dimensional Chamon code}


\author{Zhipeng Liang}
\affiliation{Harbin Institute of Technology, Shenzhen. Shenzhen, 518055, China}
\author{Xuan Wang}
\email[]{wangxuan@cs.hitsz.edu.cn}
\affiliation{Harbin Institute of Technology, Shenzhen. Shenzhen, 518055, China}


\date{\today}

\begin{abstract}
Fracton models have attracted considerable interest as candidates for quantum memories because of their unconventional ground-state degeneracy (GSD) and restricted-mobility excitations. The four-dimensional (4D) Chamon code introduced in our previous work \cite{liang2025high} is constructed via the 4D XYZ product of two two-dimensional (2D) toric codes. Its GSD grows exponentially with the system size, similar to that of the three-dimensional (3D) Chamon code, suggesting that it may be regarded as a 4D generalization of the 3D Chamon code. However, the excitation properties of the 4D Chamon code have not been studied in depth, and a high-performance decoding strategy is still lacking. In this work, we first establish the correspondence between the algebraic structure of the 4D Chamon code and the 4D lattice, thereby characterizing the geometric distributions of qubits and stabilizers. Second, we show that the 4D Chamon code supports three types of restricted-mobility excitations analogous to those of the 3D Chamon code, further supporting its interpretation as a 4D generalization of the 3D Chamon code. Finally, we uncover two structural properties relevant to decoding: a hyperplane symmetry and a projection-induced 2D toric-code structure. By exploiting these properties, we develop a two-layer decoding strategy that decomposes the original decoding problem into multiple independent and parallelizable subproblems. Numerical simulations show that the proposed decoder substantially outperforms BP-OSD in decoding accuracy, demonstrating the benefit of incorporating the intrinsic geometric and algebraic structures of the code into decoder design.
\end{abstract}


\maketitle

\section{Introduction}
\label{1}
Fracton models \cite{chamon2005quantum,vijay2015new,vijay2016fracton,haah2011local,nandkishore2019fractons,pretko2020fracton,ma2017fracton,tantivasadakarn2021non,shen2022fracton} constitute a class of quantum many-body systems that have attracted considerable attention in recent years. Unlike conventional topologically ordered phases, fracton models are characterized by elementary excitations with restricted mobility and GSD that grows exponentially with the system size. Representative fracton models include the 3D Chamon model \cite{chamon2005quantum,bravyi2011topological}, Haah's code \cite{haah2011local}, and the X-cube model \cite{vijay2016fracton,shen2022fracton}.

In our previous work \cite{liang2025high}, we proposed the 4D XYZ product code construction, which allows one to construct a new quantum stabilizer code \cite{gottesman1997stabilizer} from two 2D toric codes \cite{kitaev2003fault}. Further analysis shows that the GSD of this code grows exponentially with the system size, which exhibits a mathematical structure similar to that of the 3D Chamon code. Based on this similarity, we regard this code as a 4D generalization of the 3D Chamon code and refer to it as the 4D Chamon code. However, as a representative 3D fracton model, the 3D Chamon code is characterized not only by system-size-dependent GSD, but also by excitations with restricted mobility. Therefore, to further demonstrate that the 4D Chamon code introduced in Ref. \cite{liang2025high} is indeed a 4D generalization of the 3D Chamon code, it is necessary to show that it also supports restricted-mobility excitations similar to those in the 3D Chamon code. In addition, the 4D Chamon code currently lacks a high-accuracy decoding algorithm that can exploit its geometric and algebraic structures. Therefore, the motivation of this paper is twofold: first, to determine whether the 4D Chamon code can be understood as a 4D generalization of the 3D Chamon code; and second, to design a high-performance decoding algorithm based on its geometric and algebraic structures.

The main contributions of this paper are threefold. First, we establish the correspondence between the algebraic structure of the 4D Chamon code and the 4D periodic lattice, and explicitly identify the geometric distributions of physical qubits and stabilizers on the lattice. Second, we analyze the excitation properties of the 4D Chamon code. The results show that the code possesses a key feature of fracton models, namely, excitations with restricted mobility, and that its excitation types are similar to those of the 3D Chamon code. These results demonstrate that the 4D Chamon code is not only a new 4D fracton model, but can also be regarded as a 4D generalization of the 3D Chamon code. Finally, we reveal that the 4D Chamon code possesses symmetry properties and projection-induced 2D toric-code  structures. By exploiting these structures, we propose a two-layer decoding algorithm tailored to the 4D Chamon code. Numerical simulation results show that the proposed decoder effectively exploits the symmetry and projection structures of the code and achieves substantially higher decoding accuracy than the BP-OSD decoder \cite{panteleev2021degenerate}.

This paper is organized as follows. Sec.~\ref{2} introduces the necessary preliminaries, including geometric notations, chain complexes, and the CSS variant of the 4D XYZ product construction. Sec.~\ref{The 4D Chamon Code: geometric structure and excitations} establishes the correspondence between the algebraic structure of the 4D Chamon code and the 4D periodic lattice, thereby determining its geometric structure and excitation types. Sec.~\ref{two-layer decoding algorithm} presents the two-layer decoding algorithm for the 4D Chamon code, which is developed by exploiting its symmetry and projection structures. Sec.~\ref{simulation result} evaluates the performance of the proposed two-layer decoding algorithm, and Sec.~\ref{conclusion} concludes the paper.

\section {Preliminaries}
\label{2}
\subsection{Geometric notations}
\label{Geometric notations}
To describe the geometric objects in a $D$-dimensional lattice, we adopt the coordinate representation introduced in Ref. \cite{li2020fracton}, in which a $d$-dimensional geometric object ($0\leq d\leq D$) is specified by the coordinate of its geometric center. Consider a $D$-dimensional periodic lattice $\Lambda_D$ with lattice constant 2. Under this representation, the coordinate of the geometric center of any $d$-dimensional object contains exactly $d$ odd entries and $D-d$ even entries.

Accordingly, any geometric object in the 4D lattice can be classified according to the parity pattern of the coordinate of its geometric center, with its dimension given by the number of odd entries.
For example, $(0,0,0,0)$ represents a vertex; $(1,0,0,0)$ represents an edge centered at $(1,0,0,0)$; $(1,1,0,0)$ represents a face centered at $(1,1,0,0)$; $(1,1,1,0)$ represents a 3D cube centered at $(1,1,1,0)$; and $(1,1,1,1)$ represents a 4D hypercube centered at $(1,1,1,1)$.

Furthermore, the type of a geometric object is determined by the positions of the odd entries in its geometric-center coordinate. For an edge, if the unique odd entry is in the $\omega$-coordinate, with $\omega\in\{x,y,z,w\}$, then the edge is oriented along the $\omega$-axis. For example, $(1,0,0,0)$ represents an $x$-oriented edge. For a 2D face, if the two odd entries are in the $\omega_1$- and $\omega_2$-coordinates, then the face lies in the $\omega_1\omega_2$-plane. For example, $(1,1,0,0)$ represents an $xy$-face. For a 3D cube, if the unique even entry is in the $\omega$-axis, then the cube spans the other three coordinate directions and is referred to as a cube perpendicular to the $\omega$-axis. For example, $(1,1,1,0)$ represents a cube perpendicular to the $w$-axis.

Accordingly, a 4D lattice contains four types of edges, oriented along the $x$-, $y$-, $z$-, and $w$-axes, respectively; six types of faces, lying in the $xy$-, $xz$-, $xw$-, $yz$-, $yw$-, and $zw$-planes, respectively; and four types of cubes, perpendicular to the $x$-, $y$-, $z$-, and $w$-axes, respectively.

\subsection {Chain complex}
\label{chain complex}
This section introduces the concept of chain complex, which will help to better understand the CSS variant of the 4D XYZ product introduced in Sect. \ref{CSS 4D XYZ product}.

A chain complex $\mathfrak{C}$ with length $L$ is a collection of $L+1$ vector spaces $C_0,\ C_1,\cdots,C_L$ and $L$ linear maps (which are also called boundary operators) $\partial_i:C_i\rightarrow C_{i+1}\ \left(0\le i\le L-1\right)$, namely,
\begin{equation}
	\mathfrak{C}=C_0\stackrel{\partial_0}{\longrightarrow}C_1\stackrel{\partial_1}{\longrightarrow}\cdots\stackrel{\partial_{L-1}}{\longrightarrow}C_L
\end{equation}
which satisfies $\partial_i\partial_{i-1}=0$ for all $0\le i\le L-1$.

If we consider vector space over $\mathbb{F}_2$, namely, $C_i:=F_2^{n_i}$, a chain complex $\mathfrak{C}$ with length 2 naturally corresponds to a CSS code $C\left(\mathfrak{C}\right)$, namely,
\begin{equation}
	C\left(\mathfrak{C}\right)=\mathbb{F}_2^{m_z}\stackrel{H_z^T}{\longrightarrow}\mathbb{F}_2^N\stackrel{H_x}{\longrightarrow}\mathbb{F}_2^{m_x}
\end{equation}
where the commutation condition $H_xH_z^T=\mathbf{0}$ is naturally satisfied.

\subsection{The CSS variant of the 4D XYZ product}
\label{CSS 4D XYZ product}
The 4D XYZ product proposed in Ref. \cite{liang2025high} makes use of two CSS codes to construct a non-CSS code, which can be transformed into a CSS code by finite-depth unitary circuits, please see Appendix A in Ref. \cite{liang2025high} for more details. 
 
In this paper, we consider this CSS variant the 4D XYZ product. Formally, Giving two length-2 chain complexes $\mathfrak{C}_1=C_{-1}\stackrel{H_{z_1}^T}{\longrightarrow}C_0\stackrel{H_{x_1}}{\longrightarrow}C_1$ and $\mathfrak{C}_2={\widetilde{C}}_{-1}\stackrel{H_{z_2}^T}{\longrightarrow}{\widetilde{C}}_0\stackrel{H_{x_2}}{\longrightarrow}{\widetilde{C}}_1$, which corresponds to two CSS codes $C\left(\mathfrak{C}_1\right)$ and $C\left(\mathfrak{C}_2\right)$, one can construct the tensor-product structure as shown in Fig. \ref{4DXYZproduct},
where the qubits are divided into $A$, $B$, $C$, $D$, $E$ five parts and the stabilizer generators are divided into $S$, $T$, $U$, $V$ four parts. The corresponding stabilizer matrix $\mathcal{S}$ is
\begin{widetext}
	\begin{equation}
		\label{4DXYZ stabilizer_CSS}
		\mathcal{S} =\begin{bmatrix}
			S\\
			T\\
			U\\
			V
		\end{bmatrix} = \begin{bmatrix}
			Z^{\left(I_{m_1}\otimes H_{x_2}^T\right)} &Z^{\left(I_{m_1}\otimes H_{z_2}^T\right)} &Z^{\left(H_{z_1}\otimes I_{n_B}\right)} &I &I\\
			X^{\left(H_{z_1}^T\otimes I_{m_4}\right)} &I &X^{\left(I_{n_A}\otimes H_{x_2}\right)} &X^{\left(H_{x_1}^T\otimes I_{m_4}\right)} &I\\
			I &X^{\left(H_{z_1}^T\otimes I_{m_3}\right)} &X^{\left(I_{n_A}\otimes H_{z_2}\right)} &I &X^{\left(H_{x_1}^T\otimes I_{m_3}\right)}\\
			I &I &Z^{\left(H_{x_1}\otimes I_{n_B}\right)} &Z^{\left(I_{m_2}\otimes H_{x_2}^T \right)} &Z^{\left(I_{m_2}\otimes H_{z_2}^T\right)}
		\end{bmatrix}
	\end{equation}
\end{widetext}
where $m_1$, $m_2$, $m_3$, $m_4$, $n_A$, $n_B$ are the dimensions of vector spaces $C_{-1}$, $C_1$, ${\widetilde{C}}_{-1}$, ${\widetilde{C}}_1$, $C_0$ and ${\widetilde{C}}_0$, respectively. The total number of qubits is
\begin{equation}
	\label{4DXYZ code length}
	N=m_1m_4+m_1m_3+n_An_B+m_2m_4+m_2m_3
\end{equation}
\begin{figure}[htbp]
	\centering
	\includegraphics[width=0.48\textwidth]{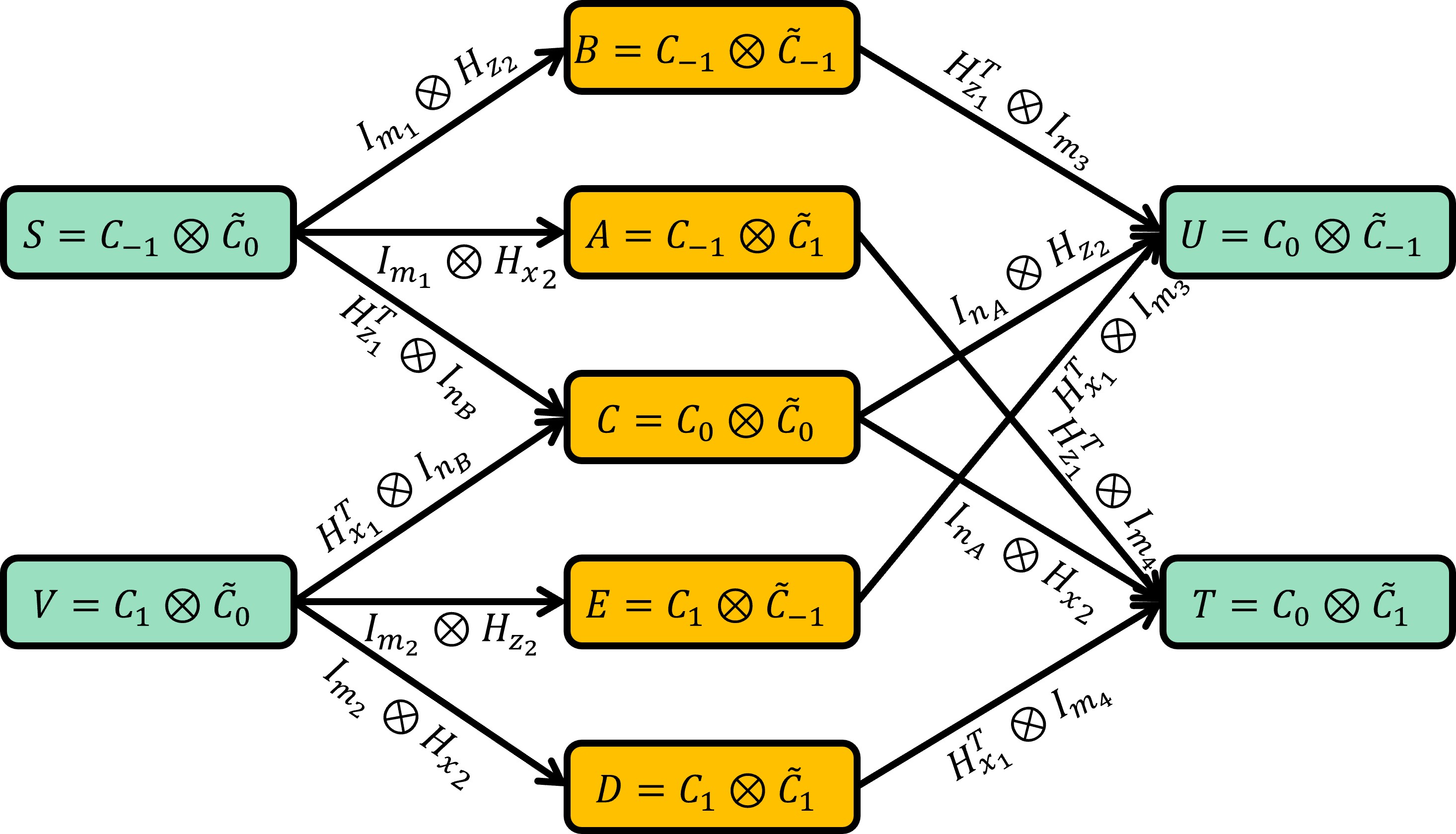}
	\caption{The tensor-product structure of the CSS variant of the 4D XYZ product of two length-2 chain complexes $\mathfrak{C}_1=C_{-1}\stackrel{H_{z_1}}{\longrightarrow}C_0\stackrel{H_{x_1}}{\longrightarrow}C_1$ and $\mathfrak{C}_2={\widetilde{C}}_{-1}\stackrel{H_{z_2}}{\longrightarrow}{\widetilde{C}}_0\stackrel{H_{x_2}}{\longrightarrow}{\widetilde{C}}_1$.}
	\label{4DXYZproduct}
\end{figure}

\section{The 4D Chamon Code: geometric structure and excitations}
\label{The 4D Chamon Code: geometric structure and excitations}
The 4D Chamon code developed in Ref. \cite{liang2025high} is a non-CSS code constructed by the 4D XYZ product of two 2D toric codes. As described Appendix A in Ref. \cite{liang2025high}, the non-CSS code constructed by the 4D XYZ product can be transformed into a CSS code through a finite-depth unitary circuit. Therefore, applying the CSS variant of the 4D XYZ product to two 2D toric codes yields a CSS variant of the 4D Chamon code. For convenience of presentation, throughout the remainder of this paper, unless otherwise specified, the term "4D Chamon code" refers to this CSS variant.

First, this section establishes the correspondence between the algebraic structure of the 4D Chamon code and the 4D periodic lattice, and explicitly identify the geometric distributions of physical qubits and stabilizers on the lattice. Second, we analyze the excitation properties of the 4D Chamon code and show that the code possesses a key feature of fracton models, namely, excitations with restricted mobility, and that its excitation types are similar to those of the 3D Chamon code.

\subsection{Geometric structure}
\label{4D Chamon code geometry structure}
The 4D Chamon code is constructed by the CSS variant of the 4D XYZ product of two 2D toric codes. Let $C_1$ denote the first 2D toric code, defined on a $2L_x\times 2L_y$ 2D periodic lattice with $X$-type and $Z$-type parity-check matrices $H_{z_1}$ and $H_{x_1}$, respectively. Similarly, Let $C_2$ denote the second 2D toric code, defined on a $2L_z\times 2L_w$ 2D periodic lattice with $X$-type and $Z$-type parity-check matrices $H_{z_2}$ and $H_{x_2}$, respectively.

For a 2D toric code, the $Z$-type and $X$-type stabilizer generators are associated with faces and vertices, respectively, whereas the physical qubits are located on edges. According to the geometric representation introduced in Sect. \ref{Geometric notations}, the coordinate of a face center contain two odd entries, namely, $(odd,odd)$, while that of a vertex contain two even entries, namely, $(even,even)$. The coordinate of an edge center contain one odd entry and one even entry, and therefore take the form $(odd,even)$ or $(even,odd)$. Consequently, after applying the 4D XYZ product to two 2D toric codes, the geometric types of the resulting stabilizers and physical qubits can be identified directly from the parity patterns of their 4D coordinates.

We first determine the geometric types of the physical qubits. As illustrated by the tensor-product structure in Fig.~\ref{4DXYZproduct}, the physical qubits of the 4D Chamon code are associated with five vector spaces, denoted by $A$, $B$, $C$, $D$, and $E$.

In particular, for $A=C_{-1}\otimes\widetilde{C}_{1}$, a basis for $A$, denoted by $\mathcal{B}(A)$, can be written as
\begin{equation}
	\mathcal{B}(A)=\{b_i\otimes c_j \mid i=1,\cdots,m_1,\ j=1,\cdots,m_4\}.
\end{equation}
Here, $\mathcal{B}(C_{-1})=\{b_1,\cdots,b_{m_1}\}$ and $\mathcal{B}(\widetilde{C}_1)=\{c_1,\cdots,c_{m_4}\}$ are bases for the vector spaces $C_{-1}$ and $\widetilde{C}_{1}$, respectively. According to the correspondence between the chain-complex structure and the lattice geometry, each element of $\mathcal{B}(C_{-1})$ corresponds to a $Z$-type stabilizer generator of $C_1$, whereas each element of $\mathcal{B}(\widetilde{C}_1)$ corresponds to an $X$-type stabilizer generator of $C_2$. Therefore, each $b_i\in\mathcal{B}(C_{-1})$ corresponds to a face whose geometric-center coordinate has the parity pattern $(odd,odd)$, while each $c_j\in\mathcal{B}(\widetilde{C}_1)$ corresponds to a vertex whose geometric-center coordinate has the parity pattern $(even,even)$. Consequently, every tensor-product basis element $b_i\otimes c_j\in\mathcal{B}(A)$ corresponds to a 4D geometric object whose geometric-center coordinate has the parity pattern $(odd,odd,even,even)$, namely, a face lying in the $xy$-plane. Hence, the vector space $A$ corresponds to physical qubits located on $xy$-faces.

The correspondences between the remaining four vector spaces and the physical qubits can be obtained in the same manner. Consequently, the correspondences between the five vector spaces and the five types of physical qubits are summarized in Table \ref{4D Chamon code qubit correspondence}.
\begin{table*}[htbp]
	\centering
	\caption{The correspondence between the types of qubits in 4D Chamon code and vector spaces}
	\label{4D Chamon code qubit correspondence}
	
	\vspace{0.5em}
	
	\makebox[\textwidth][c]{
		\begin{tabular}{c c c c}
			\hline
			Vector space & Tensor-product structure & Coordinate parity & Qubit type \\
			\hline
			$A$ & $C_{-1}\otimes\widetilde{C}_1$ 
			& $(odd,odd,even,even)$ 
			& $xy$-face \\
			\hline
			
			$B$ & $C_{-1}\otimes\widetilde{C}_{-1}$ 
			& $(odd,odd,odd,odd)$ 
			& Hypercube \\
			\hline
			
			$C$ & $C_{0}\otimes\widetilde{C}_{0}$ 
			& two odd numbers and two even numbers 
			& $xz$-, $xw$-, $yz$-, $yw$-face \\
			\hline
			
			$D$ & $C_{1}\otimes\widetilde{C}_{1}$ 
			& $(even,even,even,even)$ 
			& vertex \\
			\hline
			
			$E$ & $C_{1}\otimes\widetilde{C}_{-1}$ 
			& $(even,even,odd,odd)$ 
			& $zw$-face \\
			\hline
		\end{tabular}
	}
\end{table*}

The geometric distribution of the stabilizers can be determined in the same manner. According to the tensor-product structure shown in Fig.~\ref{4DXYZproduct}, the stabilizers fall into four types, corresponding to the vector spaces $S$, $T$, $U$, and $V$, respectively. Based on the tensor-product structure and the parity patterns of their geometric-center coordinates, the correspondence between the four stabilizer types and the associated vector spaces is summarized in Table~\ref{4D Chamon code stabilizers correspondence}.

\begin{table*}[htbp]
	\centering
	\caption{The correspondence between the types of stabilizers in 4D Chamon code and vector spaces}
	\label{4D Chamon code stabilizers correspondence}
	
	\vspace{0.5em}
	
	\makebox[\textwidth][c]{
		\begin{tabular}{c c c c}
			\hline
			Vector space & Tensor-product structure & Coordinate parity & Stabilizer type \\
			\hline
			\multirow{2}{*}{$S$} & \multirow{2}{*}{$C_{-1}\otimes\widetilde{C}_0$} & $(odd,odd,even,odd)$ & \multirow{2}{*}{Cubes perpendicular to $z$- and $w$-axes}\\
			&                                                 & or $(odd,odd,odd,even)$ &\\
			\hline
			\multirow{2}{*}{$T$} & \multirow{2}{*}{$C_{0}\otimes\widetilde{C}_{1}$} & $(odd,even,even,even)$ & \multirow{2}{*}{Edges along $x$- and $y$-axes}\\
			&                                                 &or $(even,odd,even,even)$ &\\
			\hline
			\multirow{2}{*}{$U$} & \multirow{2}{*}{$C_{0}\otimes\widetilde{C}_{-1}$} & $(odd,even,odd,odd)$  & \multirow{2}{*}{Cubes perpendicular to $x$- and $y$-axes}\\
			&                                                 &or $(even,odd,odd,odd)$ &\\
			\hline
			\multirow{2}{*}{$V$} & \multirow{2}{*}{$C_{1}\otimes\widetilde{C}_{0}$} & $(even,even,odd,even)$ & \multirow{2}{*}{Edges along $z$- and $w$-axes}\\
			&                                                 &or $(even,even,even,odd)$ &\\
			\hline
		\end{tabular}
	}
\end{table*}

In summary, the physical qubits of the 4D Chamon code are located at the vertices, the centers of the six types of faces, and the centers of the hypercubes of $\Lambda_4$, whereas the stabilizers are associated with the four types of edges and the four types of cubes. Formally, the 4D Chamon Code are defined as follows:

\begin{definition}[\textbf{The geometric structure and Hamiltonian of the 4D Chamon code}]
	\label{4D Chamon code stabilizer equation}
	The qubits of the 4D Chamon code are located at the vertices, the centers of the hypercubes, and the centers of the six types of faces of the 4D periodic lattice. The X-type stabilizers are associated with the edges along the $x$- and $y$-axes and the cubes perpendicular to the $x$- and $y$-axes, namely,
	\begin{equation}
		S_{e}^{X} = \prod_{v\in V(e)}X_v \prod_{f\in F(e)}X_f,\ 
		S_{c}^{X} = \prod_{hc\in HC(c)}X_{hc} \prod_{f\in F(c)}X_f
	\end{equation}
	Similarly, the $Z$-type stabilizers are associated with the edges along the $z$- and $w$-axes and the cubes perpendicular to the $z$- and $w$-axes, namely,
	\begin{equation}
		S_{e}^{Z} = \prod_{v\in V(e)}Z_v \prod_{f\in F(e)}Z_f,\ 
		S_{c}^{Z} = \prod_{hc\in HC(c)}Z_{hc} \prod_{f\in F(c)}Z_f
	\end{equation}
	Here, $V(e)$ denotes the set of vertices adjacent to edge $e$, $F(e)$ denotes the set of faces adjacent to $e$, $F(c)$ denotes the set of faces contained in the boundary of the cube $c$, and $HC(c)$ denotes the set of hypercubes adjacent to $c$.
	
	The corresponding Hamiltonian is defined as:
	\begin{equation}
		\label{4D Chamon code Hamiltonian}
		H_{Chamon}=-\sum_{e\in E_{xy}}S_e^{X}-\sum_{c\in C_{xy}}S_c^{X}-\sum_{e\in E_{zw}}S_e^{Z}-\sum_{c\in C_{zw}}S_c^{Z}
	\end{equation}
	Here, $E_{xy}$, $C_{\perp xy}$, $E_{zw}$, and $C_{\perp zw}$ denote the set of edges along the $x$- and $y$-axes, the set of cubes perpendicular to the $x$- and $y$-axes, the set of edges along the $z$- and $w$-axes, and the set of cubes perpendicular to the $z$- and $w$-axes, respectively.
\end{definition}

Fig. \ref{4D Chamon code stabilizer} shows the geometric structure  the $X$-type and $Z$-type stabilizers of the 4D Chamon code corresponding to \textbf {Definition} \ref {4D Chamon code stabilizer equation}.

\begin{figure}[htbp]
	\centering
	\includegraphics[width=0.48\textwidth]{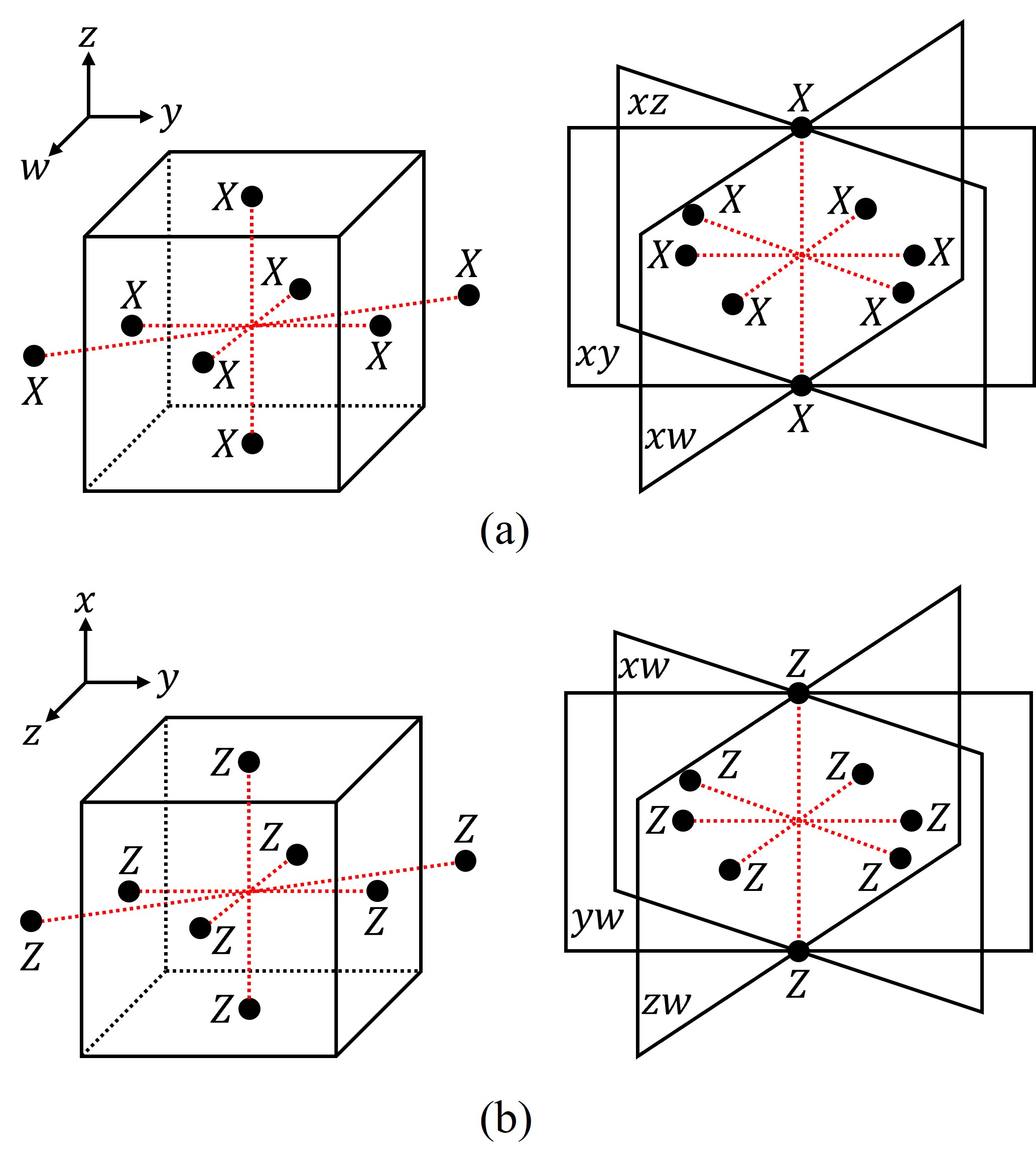}
	\caption{The stabilizers of the 4D Chamon code. (a) The $X$-type stabilizers that corresponds to a cube perpendicular to $x$-axis (left) and an edge along $x$-axis (right). (b) The $Z$-type stabilizers that corresponds to a cube perpendicular to $w$-axis (left) and an edge along $w$-axis (right).}
	\label{4D Chamon code stabilizer}
\end{figure}

In a 4D periodic lattice, each cube is bounded by six faces. Meanwhile, each cube is shared by two adjacent hypercubes, in the same way that each face is shared by two adjacent cubes. Moreover, each edge connects two vertices and is adjacent to six faces. Therefore, each stabilizer shown in Fig. \ref{4D Chamon code stabilizer} has weight $8$.

It can be seen that, regardless of whether the qubit is located at a vertex, a face center, or a hypercube center, a single Pauli $X$ or $Z$ error flips four adjacent stabilizers. There are eight cases in total, as summarized in Table \ref{qubit flip stabilizer types}. Let $E_{\omega}$ denote an edge-type stabilizer along the $\omega$-axis and $C_{\perp \omega}$ denote a cube-type stabilizer perpendicular to the $\omega$-axis, with $\omega\in\{x,y,z,w\}$. From Table \ref{qubit flip stabilizer types}, one can see that Pauli $Z$ errors on vertex qubits and $xy$-face qubits only excite edge stabilizers along the $x$- and $y$-axes; Pauli $Z$ errors on $zw$-face qubits and hypercube-center qubits only excite cube stabilizers perpendicular to the $x$- and $y$-axes; while errors on face qubits lying in the $xz$-, $xw$-, $yz$-, and $yw$- planes excite both edge stabilizers and cube stabilizers simultaneously.

\begin{table*}[htbp]
	\centering
	\caption{The types of stabilizers flipped by qubit errors}
	\label{qubit flip stabilizer types}
	\vspace{0.5em}
	\makebox[\textwidth][c]{
		\begin{tabular}{c c c}
			\hline
			Qubit type & Stabilizers flipped by Pauli $X$ error & Stabilizers flipped by Pauli $Z$ error\\
			\hline
			Vertex & $2E_z+2E_w$ & $2E_x+2E_y$\\
			\hline
			Hypercube & $2C_{\perp z}+2C_{\perp w}$ & $2C_{\perp x}+2C_{\perp y}$\\
			\hline
			$xy$-face & $2C_{\perp z}+2C_{\perp w}$ & $2E_x+2E_y$\\
			\hline
			$zw$-face & $2E_z+2E_w$ & $2C_{\perp x}+2C_{\perp y}$\\
			\hline
			$xz$-face & $2E_z+2C_{\perp w}$ & $2E_x+2C_{\perp y}$ \\
			\hline
			$xw$-face & $2C_{\perp z}+2E_w$ & $2E_x+2C_{\perp y}$ \\
			\hline
			$yz$-face & $2E_z+2C_{\perp w}$ & $2C_{\perp x}+2E_y$ \\
			\hline
			$yw$-face & $2C_{\perp z}+2E_w$ & $2C_{\perp x}+2E_y$ \\
			\hline
		\end{tabular}
	}
\end{table*}

\subsection{Excitations}
In our previous work \cite{liang2025high}, we proved that the 4D Chamon code encodes $k=8\gcd(L_x,L_y)\gcd(L_z,L_w)$ logical qubits. Consequently, the ground-state degeneracy (GSD) of the Hamiltonian in Eq.~(\ref{4D Chamon code Hamiltonian}) is $2^k$ and therefore grows exponentially with the lattice size. Such system-size-dependent GSD is an important characteristic of fracton models. However, this characteristic alone is not sufficient to establish the 4D Chamon code as a new 4D fracton model.

In this section, we show that the 4D Chamon code possesses three types of excitations with restricted mobility, namely, monopoles, dipoles, and quadrupoles. The monopole and dipole excitations have the same structures as their counterparts in the 3D Chamon code, while the quadrupole excitations in the two models exhibit similar structural features. These results further support the interpretation of the 4D Chamon code as a 4D generalization of the 3D Chamon code and as a new 4D fracton model.

Since the $X$-type and $Z$-type stabilizers of the 4D Chamon code have analogous structures, in this paper we focus only on the excitations created by Pauli $Z$ errors that flip the measurement outcomes of $X$-type stabilizers. The corresponding conclusions for the excitations generated by Pauli $X$ errors that flip the measurement outcomes of $Z$-type stabilizers can be drawn analogously.

\subsubsection{Monopole and dipole}
\label{Monopole and dipole}
In this section, we describe two types of operators that create monopole and dipole excitations, respectively.

Note that, in a 4D lattice, if the geometric centers of several objects share the same $z$- and $w$-coordinates, then they lie in the same $xy$-plane. More formally, for any fixed values $z_0$ and $w_0$, the set
\begin{equation}
	\Pi_{xy}(z_0,w_0)
	=
	\{(x,y,z,w)\mid z=z_0,\ w=w_0\}
\end{equation}
defines a 2D $xy$-plane. Hence, the geometric centers of all objects with the same $z$- and $w$-coordinates lie in plane $\Pi_{xy}(z_0,w_0)$.

Similarly, for fixed values $x_0$ and $y_0$, the set
\begin{equation}
	\Pi_{zw}(x_0,y_0)
	=
	\{(x,y,z,w)\mid x=x_0,\ y=y_0\}
\end{equation}
defines a 2D $zw$-plane containing the geometric centers of all objects with the same $x$- and $y$-coordinates.

In the 4D Chamon code, isolated monopole excitations can be created by rectangular membrane operators supported on 2D $xy$- or $zw$-planes. For example, within an $xy$-plane $\Pi_{xy}(z_0,w_0)$, consider a rectangular region $M=M_{\mathrm{odd}}\cup M_{\mathrm{even}}$, highlighted in orange in Fig.~\ref{Monopoles and Dipoles}(a), where $M_{\mathrm{odd}}$ and $M_{\mathrm{even}}$ denote the sets of geometric centers of objects for which $x+y$ is odd and even, respectively. The corresponding rectangular membrane operator is defined as
\begin{equation}
	W(M)=\prod_{m\in M_{\mathrm{even}}} Z_m .
\end{equation}
It can be verified that $W(M)$ anticommutes only with the $X$-type stabilizers located near the four corners of the rectangular region. Consequently, it flips the measurement outcomes of these four stabilizers and creates four isolated monopole excitations, as indicated by the purple double circles in Fig.~\ref{Monopoles and Dipoles}(a).

In the 4D Chamon code, dipole excitations can be created by rigid string operators supported on 2D $xy$- or $zw$-planes. For example, within an $xy$-plane $\Pi_{xy}(z_0,w_0)$, consider a string-like region aligned with the face diagonal, denoted by $\gamma=\gamma_{\mathrm{odd}}\cup\gamma_{\mathrm{even}}$, as highlighted in orange in Fig.~\ref{Monopoles and Dipoles}(b), where $\gamma_{\mathrm{odd}}$ and $\gamma_{\mathrm{even}}$ denote the sets of geometric centers of objects for which $x+y$ is odd and even, respectively. The corresponding rigid string operator is defined as
\begin{equation}
	W(\gamma)=\prod_{m\in\gamma_{\mathrm{even}}} Z_m .
\end{equation}
It can be verified that $W(\gamma)$ anticommutes with a pair of $X$-type stabilizers located near each endpoint of the string. Consequently, it flips the measurement outcomes of four stabilizers in total and creates two pairs of dipoles, as indicated by the purple double circles in Fig.~\ref{Monopoles and Dipoles}(b).

\begin{figure}[htbp]
	\centering
	\includegraphics[width=0.48\textwidth]{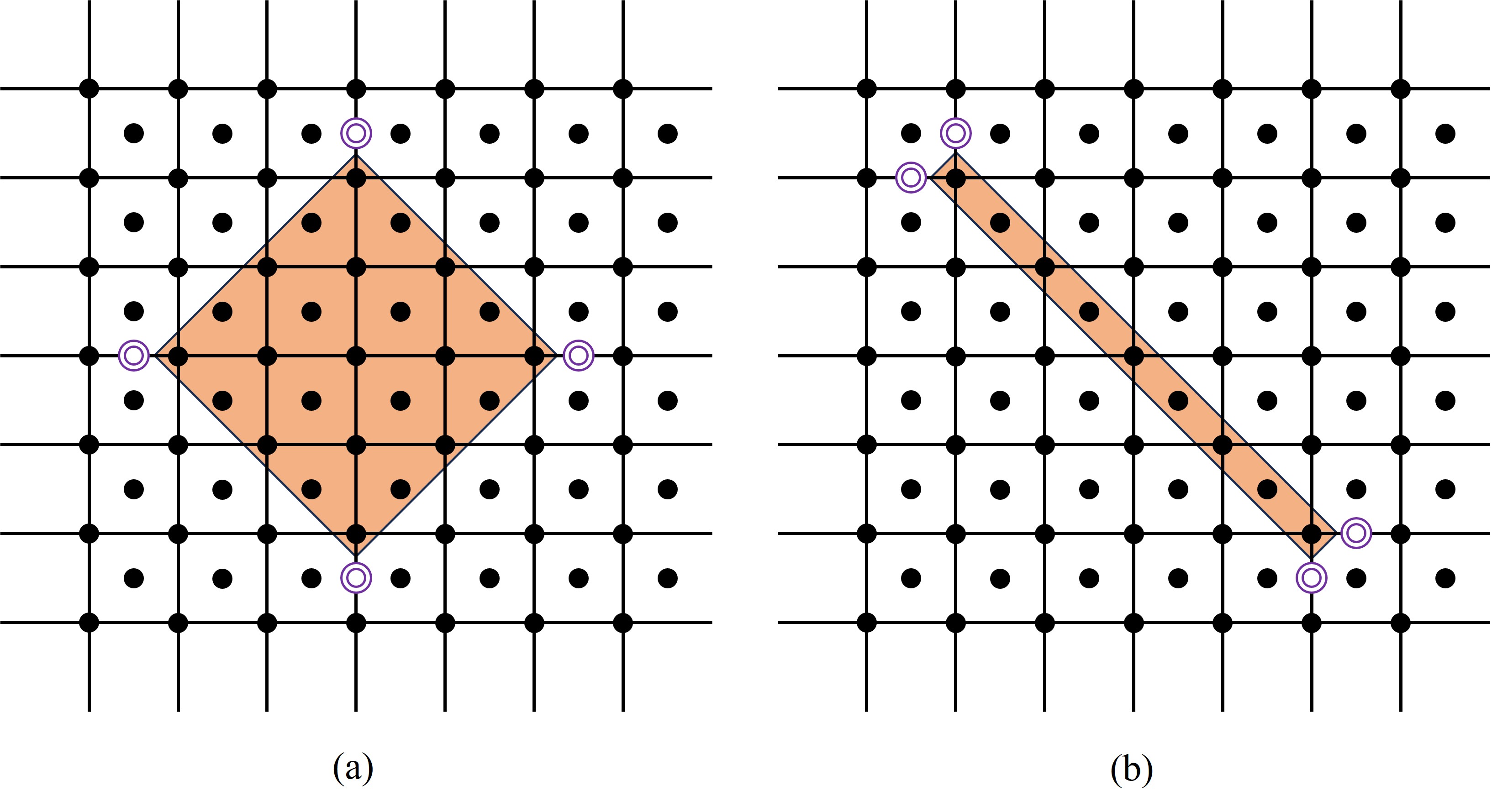}
	\caption{The (a) monopoles and (b) dipoles in the 4D Chamon code, which are created by rectangular-shaped membrane operator and rigid string operator, respectively.}
	\label{Monopoles and Dipoles}
\end{figure}

Unlike the freely mobile excitations in conventional topological order models, the excitations in the 4D Chamon code have restricted mobility. A single monopole cannot be moved individually by any finitely supported Pauli operator; any local operation attempting to move a single monopole will create additional excitations. For a dipole, although it is not completely immobile like a monopole, it can only be moved by a Pauli operator with finite support along the direction of the rigid string, namely, it can move only along the direction of the string.

\subsubsection{Quadrupole}
\label{Quadrupole}
Unlike the dipole excitations created by rigid string operators, Ref.~\cite{bravyi2011topological} demonstrated that quadrupole excitations in the 3D Chamon code can be created by flexible string operators. Such flexible strings are supported on two adjacent planes perpendicular to a body-diagonal. They can be deformed arbitrarily within these two planes but cannot extend beyond them. See Ref.~\cite{bravyi2011topological} for further details.

In the following, we demonstrate that the 4D Chamon code admits an analogous structure: its quadrupole excitations can be generated by flexible ribbon operators supported on two adjacent hyperplanes. To provide a clearer geometric representation of these flexible ribbon operators, we first show that the stabilizers of the 4D Chamon code located on hyperplanes perpendicular to the vector $(1,1,1,1)$ can be projected onto a 3D face-centered cubic (FCC) lattice.

First, we introduce the following equation for a family of hyperplanes:
\begin{equation}
	(r_x x+r_y y+r_z z+r_w w)\bmod g = E,
\end{equation}
where $\mathbf{r}=(r_x,r_y,r_z,r_w)$ denotes the normal vector of the hyperplanes, and
\begin{equation}
	g=\gcd(2r_xL_x,2r_yL_y,2r_zL_z,2r_wL_w).
\end{equation}
According to the geometric representation introduced in Sec.~\ref{Geometric notations}, all physical qubits lie on hyperplanes with normal vector $(1,1,1,1)$ and even values of $E$, whereas all stabilizers lie on hyperplanes with odd values of $E$.

Consider the hyperplane $\Sigma_e$ with normal vector $\mathbf{r}=(1,1,1,1)$ and $E=e$. Any coordinate $\mathbf{q}=(a,b,c,d)$ lying on $\Sigma_e$ satisfies
\begin{equation}
	(a+b+c+d)\bmod d=e.
\end{equation}
This constraint leaves three independent degrees of freedom, and hence $\Sigma_e$ is three-dimensional. This is analogous to a plane in a 3D cubic lattice that is perpendicular to the body-diagonal $(1,1,1)$ and therefore has two independent degrees of freedom. Consequently, the stabilizers lying on $\Sigma_e$, together with the qubits lying on the two neighboring hyperplanes $\Sigma_{e+1}$ and $\Sigma_{e-1}$, can be represented in a 3D coordinate system through an isometric projection.

To construct a 3D coordinate system associated with the hyperplane $\Sigma_e$, we choose three unit vectors orthogonal to the normal vector $(1,1,1,1)$:
\begin{equation}
	\label{basis}
	\begin{aligned}
		&n_1 = \frac{1}{2}(1,1,-1,-1),\\
		&n_2 = \frac{1}{2}(1,-1,1,-1),\\
		&n_3 = \frac{1}{2}(1,-1,-1,1).
	\end{aligned}
\end{equation}
It is straightforward to verify that
$n_i\cdot n_j=\delta_{ij}$ and
$n_i\cdot(1,1,1,1)=0$.
Therefore, $\{n_1,n_2,n_3\}$ forms an orthonormal basis for the subspace orthogonal to the normal vector $(1,1,1,1)$.

Taking the reference point $r_0=(e,0,0,0)$ on $\Sigma_e$ as the origin of the 3D coordinate system, the 3D coordinate of an arbitrary point
$r=(a,b,c,d)\in\Sigma_e$ is defined by the mapping
\begin{equation}
	\label{projection1}
	\Phi_e(r)=
	\begin{bmatrix}
		n_1\cdot(r-r_0)\\
		n_2\cdot(r-r_0)\\
		n_3\cdot(r-r_0)
	\end{bmatrix}.
\end{equation}
Explicitly, we obtain
\begin{equation}
	\label{projection2}
	\Phi_e(r)=
	\begin{bmatrix}
		\xi\\
		\eta\\
		\zeta
	\end{bmatrix}
	=
	\begin{bmatrix}
		\frac{a+b-c-d-e}{2}\\
		\frac{a-b+c-d-e}{2}\\
		\frac{a-b-c+d-e}{2}
	\end{bmatrix}.
\end{equation}

Since the three basis vectors in Eq.~(\ref{basis}) are mutually orthogonal and have unit norm, the coordinate transformation defined in Eq.~(\ref{projection1}) preserves Euclidean distances within the hyperplane $\Sigma_e$. Hence, $\Phi_e$ defines an isometry between $\Sigma_e$ and its 3D coordinate representation, rather than a geometrically distorted projection.

For a stabilizer located on the hyperplane $\Sigma_e$ with coordinate
$r_s=(a_s,b_s,c_s,d_s)$, Eq.~\eqref{projection2} gives
\begin{equation}
	\label{projection3}
	\begin{bmatrix}
		\xi\\
		\eta\\
		\zeta
	\end{bmatrix}
	=
	\begin{bmatrix}
		a_s+b_s-e\\
		a_s+c_s-e\\
		-b_s-c_s
	\end{bmatrix}.
\end{equation}
It follows that
\begin{equation}
	\xi+\eta+\zeta=2(a_s-e),
\end{equation}
and hence
\begin{equation}
	(\xi+\eta+\zeta)\bmod 2=0.
\end{equation}
Therefore, the projected coordinates of the stabilizers on $\Sigma_e$
form the 3D lattice
\begin{equation}
	\Lambda_3
	=
	\left\{
	(\xi,\eta,\zeta)\in\mathbb{Z}_{2L}^3
	\;\middle|\;
	(\xi+\eta+\zeta)\bmod 2=0
	\right\}.
\end{equation}
This lattice is equivalent to a 3D FCC lattice, in which the projected stabilizers are located at the eight vertices and six face centers of each conventional cubic unit cell. We next characterize the distribution patterns of the $X$-type and $Z$-type stabilizers on $\Sigma_e$ after projection onto the FCC lattice.

\begin{corollary}[\textbf{Distribution of projected stabilizers}]
	\label{X and Z stabilizer projection}
	After the $X$-type and $Z$-type stabilizers on the hyperplane $\Sigma_e$ are projected onto the 3D FCC lattice $\Lambda_3$, the $X$-type stabilizers are located at the vertices and the centers of the $\eta\zeta$-faces, whereas the $Z$-type stabilizers are located at the centers of the $\xi\zeta$- and $\xi\eta$-faces. Equivalently,
	\begin{equation}
		\begin{aligned}
			\Lambda_X
			&=
			\{(\xi_X,\eta_X,\zeta_X)\in\Lambda_3
			\mid \xi_X \bmod 2 =0 \},\\
			\Lambda_Z
			&=
			\{(\xi_Z,\eta_Z,\zeta_Z)\in\Lambda_3
			\mid \xi_Z \bmod 2 =1 \}.
		\end{aligned}
	\end{equation}
\end{corollary}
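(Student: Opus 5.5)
The plan is to read off the parity of the first projected coordinate $\xi$ directly from the parity pattern of each stabilizer type in Table~\ref{4D Chamon code stabilizers correspondence}, using the explicit projection formula in Eq.~(\ref{projection3}). First I will recall why Eq.~(\ref{projection3}) holds. For a stabilizer at $r_s=(a_s,b_s,c_s,d_s)\in\Sigma_e$ I will work with an integer lift in the covering lattice on which $a_s+b_s+c_s+d_s=e$ holds exactly, as is implicit in the derivation of Eq.~(\ref{projection3}). Substituting $d_s=e-a_s-b_s-c_s$ into Eq.~(\ref{projection2}) then gives $\xi=a_s+b_s-e$, $\eta=a_s+c_s-e$ and $\zeta=-b_s-c_s$. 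Since stabilizers lie on hyperplanes with odd $E$, $e$ is odd, and therefore
\begin{equation}
\xi \bmod 2 = (a_s+b_s+1)\bmod 2 .
\end{equation}
So the whole statement reduces to the parity of $a_s+b_s$, i.e. of the sum of the $x$- and $y$-coordinates of the stabilizer.

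Second, I will do a case check over the eight stabilizer types in Definition~\ref{4D Chamon code stabilizer equation}. The $X$-type stabilizers are the $x$- and $y$-edges $(odd,even,even,even)$ and $(even,odd,even,even)$, and the cubes perpendicular to $x$ and $y$, $(even,odd,odd,odd)$ and $(odd,even,odd,odd)$. In every one of these exactly one of $a_s,b_s$ is odd, so $a_s+b_s$ is odd and $\xi$ is even.

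The $Z$-type stabilizers are the $z$- and $w$-edges, with $a_s,b_s$ both even, and the cubes perpendicular to $z$ and $w$, with $a_s,b_s$ both odd. In each case $a_s+b_s$ is even, so $\xi$ is odd. This gives $\Lambda_X=\{\xi\equiv 0\}$ and $\Lambda_Z=\{\xi\equiv 1\}$ inside $\Lambda_3$.

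Third, I will translate these congruences into the FCC picture. In $\Lambda_3$ we have $\xi+\eta+\zeta$ even.
\begin{itemize}
\item If $\xi$ is even, then $\eta\equiv\zeta \pmod 2$. Both even gives a cube vertex, and both odd gives an $\eta\zeta$-face center.
\item If $\xi$ is odd, exactly one of $\eta,\zeta$ is odd. If $\eta$ is odd and $\zeta$ even, this is a $\xi\eta$-face center. If $\zeta$ is odd and $\eta$ even, this is a $\xi\zeta$-face center.
\end{itemize}
This is exactly the geometric description in the corollary.

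The main obstacle is the periodic identification. The coordinates live modulo $2L_\omega$ and $\Sigma_e$ is defined modulo $g$, so one must check that the parity of $\xi$ is well defined rather than an artifact of the chosen lift. I would handle this by noting two facts. First, shifting any coordinate by $2L_\omega$ changes $a_s+b_s$ by an even amount. Second, choosing a different representative $d_s$ with $a_s+b_s+c_s+d_s=e+kg$ only enters $\xi$ through $d_s$, and it does not affect the parity of $a_s+b_s$, which is all that is used. If the lift still causes trouble, I would simply restrict to the fundamental domain where the sum equals $e$ exactly, as Eq.~(\ref{projection3}) already does.
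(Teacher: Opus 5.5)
Your proof is correct and follows the paper's route: the paper also substitutes the parity patterns of Table~\ref{4D Chamon code stabilizers correspondence} into Eq.~(\ref{projection3}), but it records the full residue $(\xi,\eta,\zeta)\bmod 2$ for all eight stabilizer types, which gives the finer disjoint decomposition into $\Lambda_x,\Lambda_y,\Lambda_z,\Lambda_w$, whereas you track only $\xi\equiv a_s+b_s+1 \pmod 2$ and recover the vertex/face description from the FCC constraint. One small correction to your periodicity remark: a lift with $a_s+b_s+c_s+d_s=e+kg$ gives $\xi=a_s+b_s-e-kg/2$ in Eq.~(\ref{projection2}), and $kg/2=k\gcd(L_x,L_y,L_z,L_w)$ can be odd, so the parity of $a_s+b_s$ is not ``all that is used'' for such lifts; the parity of $\xi$ is meaningful only for lifts with sum exactly $e$, and those are harmless because any two of them differ by shifts $\delta_\omega\in 2L_\omega\mathbb{Z}$ that change $\xi$ by the even amount $\delta_a+\delta_b$.
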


\begin{proof}
	Combining the correspondence between the stabilizer types of the 4D Chamon code and the parity patterns of their 4D geometric-center coordinates, as summarized in Table~\ref{4D Chamon code stabilizers correspondence}, with Eq.~\eqref{projection3}, we obtain the correspondence between the stabilizer types and the parity patterns of their projected coordinates in the 3D FCC lattice, as summarized in Table~\ref{4D Chamon code stabilizer FCC correspondence}.
	
	\begin{table}[htbp]
		\caption{Correspondence between the stabilizer types of the 4D Chamon code and the projected coordinates in the 3D FCC lattice}
		\label{4D Chamon code stabilizer FCC correspondence}
		\begin{tabular}{c c c}
			\hline
			Stabilizer type & $(\xi,\eta,\zeta)\bmod 2$ & FCC sublattice\\
			\hline
			$E_x$ & \multirow{2}{*}{$(0,0,0)$} & \multirow{2}{*}{$\Lambda_x$}\\
			$C_{\perp x}$ & & \\
			\hline
			$E_y$ & \multirow{2}{*}{$(0,1,1)$} & \multirow{2}{*}{$\Lambda_y$}\\
			$C_{\perp y}$ & & \\
			\hline
			$E_z$ & \multirow{2}{*}{$(1,0,1)$} & \multirow{2}{*}{$\Lambda_z$}\\
			$C_{\perp z}$ & & \\
			\hline
			$E_w$ & \multirow{2}{*}{$(1,1,0)$} & \multirow{2}{*}{$\Lambda_w$}\\
			$C_{\perp w}$ & & \\
			\hline
		\end{tabular}
	\end{table}
	
	Therefore, the FCC lattice $\Lambda_3$ decomposes into four mutually disjoint sublattices,
	\begin{equation}
		\Lambda_3
		=
		\Lambda_x
		\mathbin{\dot\cup}
		\Lambda_y
		\mathbin{\dot\cup}
		\Lambda_z
		\mathbin{\dot\cup}
		\Lambda_w .
	\end{equation}
	Moreover,
	\begin{equation}
		\Lambda_X=\Lambda_x\cup\Lambda_y,\ \Lambda_Z=\Lambda_z\cup\Lambda_w.
	\end{equation}
	It therefore follows that
	\begin{equation}
		\begin{aligned}
			\Lambda_X
			&=
			\{(\xi_X,\eta_X,\zeta_X)\in\Lambda_3
			\mid \xi_X \bmod 2 =0 \},\\
			\Lambda_Z
			&=
			\{(\xi_Z,\eta_Z,\zeta_Z)\in\Lambda_3
			\mid \xi_Z \bmod 2 =1 \},
		\end{aligned}
	\end{equation}
	which completes the proof.
\end{proof}

According to Corollary \ref{X and Z stabilizer projection}, the $X$-type and $Z$-type stabilizers exhibit a layered alternating distribution along the $\xi$-axis of the FCC lattice.

For a qubit located on either $\Sigma_{e+1}$ or $\Sigma_{e-1}$ with coordinate
$r_q=(a_q,b_q,c_q,d_q)$, substituting $r_q$ into Eq.~\eqref{projection1} gives
\begin{equation}
	\label{projection4}
	\Phi_e(r_q)=
	\begin{bmatrix}
		\xi_q\\
		\eta_q\\
		\zeta_q
	\end{bmatrix}
	=
	\begin{bmatrix}
		a_q+b_q-\left(e+\frac{\sigma}{2}\right)\\
		a_q+c_q-\left(e+\frac{\sigma}{2}\right)\\
		-b_q-c_q+\frac{\sigma}{2}
	\end{bmatrix},
\end{equation}
where $\sigma=+1$ for $r_q\in\Sigma_{e+1}$ and $\sigma=-1$ for
$r_q\in\Sigma_{e-1}$.

Eq.~\eqref{projection4} shows that all three components of the projected
coordinate $(\xi_q,\eta_q,\zeta_q)$ are half-integers. Geometrically, these
coordinates correspond to the centers of the eight unit cubes contained
within a cubic unit cell of side length 2. Therefore, the projected coordinate
can be written as
\begin{equation}
	(\xi_q,\eta_q,\zeta_q)
	=
	\left(
	i+\frac{1}{2},
	j+\frac{1}{2},
	k+\frac{1}{2}
	\right),
	\ i,j,k\in\mathbb{Z}.
\end{equation}
We denote the qubit located at this projected coordinate by $q_{i,j,k}$.

Next, we show that the parity of $i+j+k$ determines from which neighboring
hyperplane the qubit originates: if $i+j+k$ is even, then $q_{i,j,k}$
originates from $\Sigma_{e+1}$; if $i+j+k$ is odd, then it originates from
$\Sigma_{e-1}$.

\begin{corollary}[\textbf{Distribution of projected qubits}]
	\label{qubit projection}
	For a projected qubit with coordinate
	\[
	(\xi_q,\eta_q,\zeta_q)
	=
	\left(
	i+\frac{1}{2},
	j+\frac{1}{2},
	k+\frac{1}{2}
	\right),
	\]
	the qubit originates from the hyperplane $\Sigma_{e+1}$ if
	$(i+j+k)\bmod 2=0$, whereas it originates from the hyperplane
	$\Sigma_{e-1}$ if $(i+j+k)\bmod 2=1$.
\end{corollary}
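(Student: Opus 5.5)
The plan is a direct parity computation that uses only Eq.~\eqref{projection4}, in the same spirit as the proof of Corollary~\ref{X and Z stabilizer projection}. The idea is to form the sum of the three projected coordinates, as was done for stabilizers, where we obtained $\xi+\eta+\zeta=2(a_s-e)$. For a qubit on $\Sigma_{e+\sigma}$ this sum should depend on the 4D coordinates only through $a_q$ and on the label only through $\sigma$, so its parity records the hyperplane.

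\textbf{Step 1.} Summing the three rows of Eq.~\eqref{projection4}, the $b_q$ and $c_q$ terms cancel, giving
\begin{equation*}
\xi_q+\eta_q+\zeta_q = 2a_q-2e-\tfrac{\sigma}{2}.
\end{equation*}
This uses only that $r_q$ lies on $\Sigma_{e+\sigma}$, i.e.\ $a_q+b_q+c_q+d_q=e+\sigma$. That constraint was already used to eliminate $d_q$ when deriving Eq.~\eqref{projection4}.

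\textbf{Step 2.} Substituting $(\xi_q,\eta_q,\zeta_q)=(i+\tfrac12,j+\tfrac12,k+\tfrac12)$ gives
\begin{equation*}
i+j+k+\tfrac32 = 2a_q-2e-\tfrac{\sigma}{2},
\end{equation*}
so $i+j+k = 2(a_q-e)-\tfrac{\sigma+3}{2}$. Since $a_q$ and $e$ are integers, $2(a_q-e)$ is even. Hence the parity of $i+j+k$ equals the parity of $\tfrac{\sigma+3}{2}$. For $\sigma=+1$ this is $2$, so $i+j+k$ is even. For $\sigma=-1$ this is $1$, so $i+j+k$ is odd. Because $\sigma\in\{+1,-1\}$ exhausts the two neighbouring hyperplanes, the two cases are complementary. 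This gives both directions of the statement: an even sum forces $\sigma=+1$, and an odd sum forces $\sigma=-1$.

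\textbf{Main obstacle.} The only delicate point is the periodic boundary. The hyperplane equation holds modulo $g$, and the 3D coordinates live in $\mathbb{Z}_{2L}$, so the identity in Step~1 holds only up to multiples of $g$ (and of the lattice periods). I would handle this by noting that $g=\gcd(2L_x,2L_y,2L_z,2L_w)$ is even and the periods $2L$ are even. Shifting $a_q+b_q+c_q+d_q$ by a multiple of $g$, or shifting $\xi,\eta,\zeta$ by multiples of $2L$, therefore changes $i+j+k$ by an even amount and leaves its parity unchanged. With that observation the parity argument is well defined on the torus, and the corollary follows.
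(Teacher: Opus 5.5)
Your Steps 1 and 2 are exactly the paper's proof. You sum the rows of Eq.~\eqref{projection4} to get $2a_q-2e-\sigma/2$, equate this with $i+j+k+\tfrac32$, and read the parity off $(\sigma+3)/2$. The paper takes the hyperplane constraint as the exact equality $a_q+b_q+c_q+d_q=e+\sigma$, which Eq.~\eqref{projection4} presupposes. In that setting your argument is complete.

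Your ``main obstacle'' paragraph, however, is wrong as written, and the paper's proof does not need it. Suppose the representative instead satisfies $a_q+b_q+c_q+d_q=e+\sigma+tg$. Eliminating $d_q$ in Eq.~\eqref{projection1} then gives $\xi_q+\eta_q+\zeta_q=2a_q-2e-\sigma/2-tg/2$, so $i+j+k$ shifts by $-tg/2$, not by a multiple of $g$. For the normal $(1,1,1,1)$ we have $g/2=\gcd(L_x,L_y,L_z,L_w)$. This is odd, for example, on an isotropic lattice with odd $L$, and then an odd $t$ flips the parity. Equivalently, the 4D period translation $a\mapsto a+2L_x$ moves $(\xi,\eta,\zeta)$ by $(L_x,L_x,L_x)$, not by multiples of $2L$, and changes $i+j+k$ by $3L_x$. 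So evenness of $g$ alone does not make the parity well defined on the torus.

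The correct fix is to always choose the representative on the exact sheet $a_q+b_q+c_q+d_q=e+\sigma$. This is possible by B\'ezout, since the period sums $\sum_\omega 2L_\omega v_\omega$ range over $g\mathbb{Z}$. Two such representatives differ by a period vector with zero coordinate sum, so $a_q$ changes by $2L_xv_x$. Hence $\xi_q+\eta_q+\zeta_q=2a_q-2e-\sigma/2$ changes by the even number $4L_xv_x$. With this convention, your Step~2 formula already shows the parity is independent of the representative.
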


\begin{proof}
	From Eq.~\eqref{projection4}, we obtain
	\begin{equation}
		\begin{aligned}
			\xi_q+\eta_q+\zeta_q
			&=
			\left(a_q+b_q-e-\frac{\sigma}{2}\right)
			+
			\left(a_q+c_q-e-\frac{\sigma}{2}\right)\\
			&\quad+
			\left(-b_q-c_q+\frac{\sigma}{2}\right)\\
			&=
			2a_q-2e-\frac{\sigma}{2}.
		\end{aligned}
	\end{equation}
	On the other hand, since
	$\xi_q=i+\frac{1}{2}$,
	$\eta_q=j+\frac{1}{2}$, and
	$\zeta_q=k+\frac{1}{2}$, we have
	\begin{equation}
		\xi_q+\eta_q+\zeta_q
		=
		i+j+k+\frac{3}{2}.
	\end{equation}
	Equating the above two expressions gives
	\begin{equation}
		i+j+k
		=
		2(a_q-e)-\frac{\sigma+3}{2}.
	\end{equation}
	If the qubit lies on $\Sigma_{e+1}$, then $\sigma=1$, and hence
	\begin{equation}
		i+j+k=2(a_q-e)-2,
	\end{equation}
	which is even. If the qubit lies on $\Sigma_{e-1}$, then $\sigma=-1$, and hence
	\begin{equation}
		i+j+k=2(a_q-e)-1,
	\end{equation}
	which is odd. This completes the proof.
\end{proof}

According to Corollary~\ref{qubit projection}, any two qubits with adjacent projected coordinates must originate from different hyperplanes. Fig.~\ref{projection_fig} shows the 3D lattice obtained by projecting the stabilizers on $\Sigma_e$ together with the qubits on the neighboring hyperplanes $\Sigma_{e+1}$ and $\Sigma_{e-1}$. In the figure, black circles represent stabilizers, red circles represent qubits originating from $\Sigma_{e+1}$, and blue circles represent qubits originating from $\Sigma_{e-1}$. The projected qubits originating from the two neighboring hyperplanes alternate throughout the lattice. For clarity, the stabilizers located at the face centers, as well as those located at the two vertices with coordinates $(0,0,0)$ and $(2,2,2)$, are omitted from Fig.~\ref{projection_fig}.

\begin{figure}[htbp]
	\centering
	\includegraphics[width=0.3\textwidth]{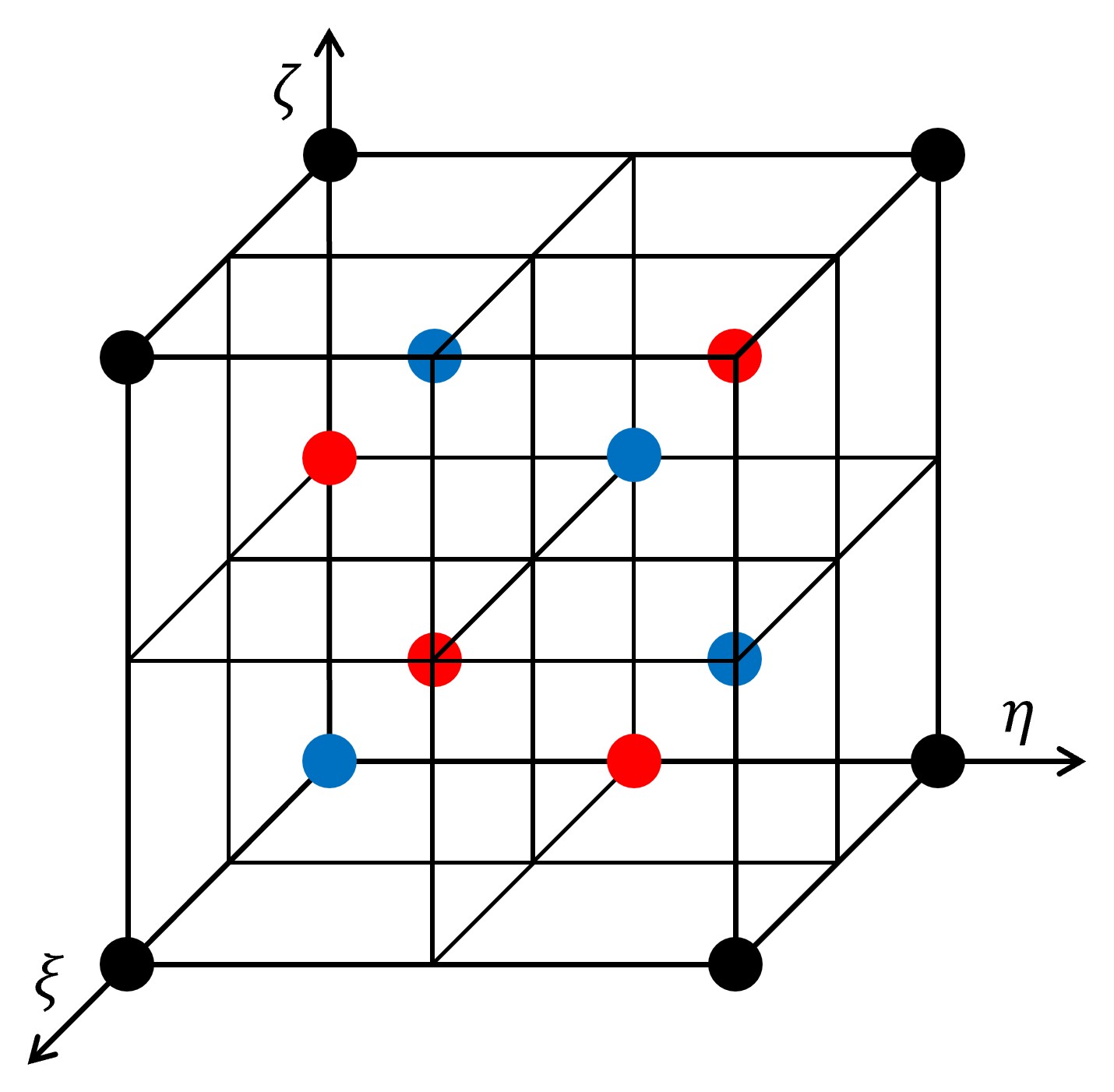}
	\caption{The projected 3D lattice, with black dots representing stabilizers and red dots representing qubits.}
	\label{projection_fig}
\end{figure}

To show that the quadrupole excitations in the 4D Chamon code can be generated by flexible ribbon operators, we first define the following elementary weight-2 rectangular operator:
\begin{equation}
	\label{square operator}
	B_{i,j,k}^Z
	=
	Z_{q_{i,j,k}}Z_{q_{i+1,j,k}},
\end{equation}
where $i$ is required to be even. As discussed above, the qubits
$q_{i,j,k}$ and $q_{i+1,j,k}$ must originate from two different hyperplanes.
Therefore, $B_{i,j,k}^Z$ is a staggered bilayer weight-2 operator.
According to the parities of $j$ and $k$, the operator defined in
Eq.~\eqref{square operator} can be classified into four types, all of which
flip eight $X$-type stabilizers. Table~\ref{weight 2 operators} summarizes
the four types of operators and the corresponding excitation distributions, which are also illustrated in Fig.~\ref{Four rectangular operators}.

\begin{table}[htbp]
	\caption{Four types of weight-2 rectangular operators and the corresponding excitation distributions}
	\label{weight 2 operators}
	\begin{tabular}{c c c}
		\hline
		Rectangular operator & $(j,k)\bmod 2$ & Excitation distribution\\
		\hline
		$B_{00}$ & $(0,0)$ & lower left--upper right\\
		\hline
		$B_{10}$ & $(1,0)$ & upper left--lower right\\
		\hline
		$B_{01}$ & $(0,1)$ & upper left--lower right\\
		\hline
		$B_{11}$ & $(1,1)$ & lower left--upper right\\
		\hline
	\end{tabular}
\end{table}

The purple double circles in Fig.~\ref{Four rectangular operators} indicate
the excitations located at the vertices and at the centers of the
$\eta\zeta$-faces. Note that both the vertex excitations and the
$\eta\zeta$-face excitations lie on the hyperplane $\Sigma_e$.
Each purple triple circle represents two excitations that overlap after
projection onto the 3D lattice, with one originating from $\Sigma_{e+1}$
and the other from $\Sigma_{e-1}$.

When viewed along the $\xi$-axis, the excitation pattern depends on the
parity of $j+k$. If $(j+k)\bmod 2=0$, the excitations are located at the
lower-left and upper-right corners; if $(j+k)\bmod 2=1$, they are located
at the lower-right and upper-left corners. In addition, as shown in
Fig.~\ref{Four rectangular operators}, two horizontally adjacent rectangular
operators of different parity types can be concatenated along the
$\eta$-direction as $B_{00}B_{10}$ or $B_{01}B_{11}$, thereby annihilating
the four internal excitations. Similarly, two vertically adjacent rectangular
operators of different parity types can be concatenated along the
$\zeta$-direction as $B_{00}B_{01}$ or $B_{10}B_{11}$, again annihilating
the four internal excitations.
\begin{figure}[htbp]
	\centering
	\includegraphics[width=0.48\textwidth]{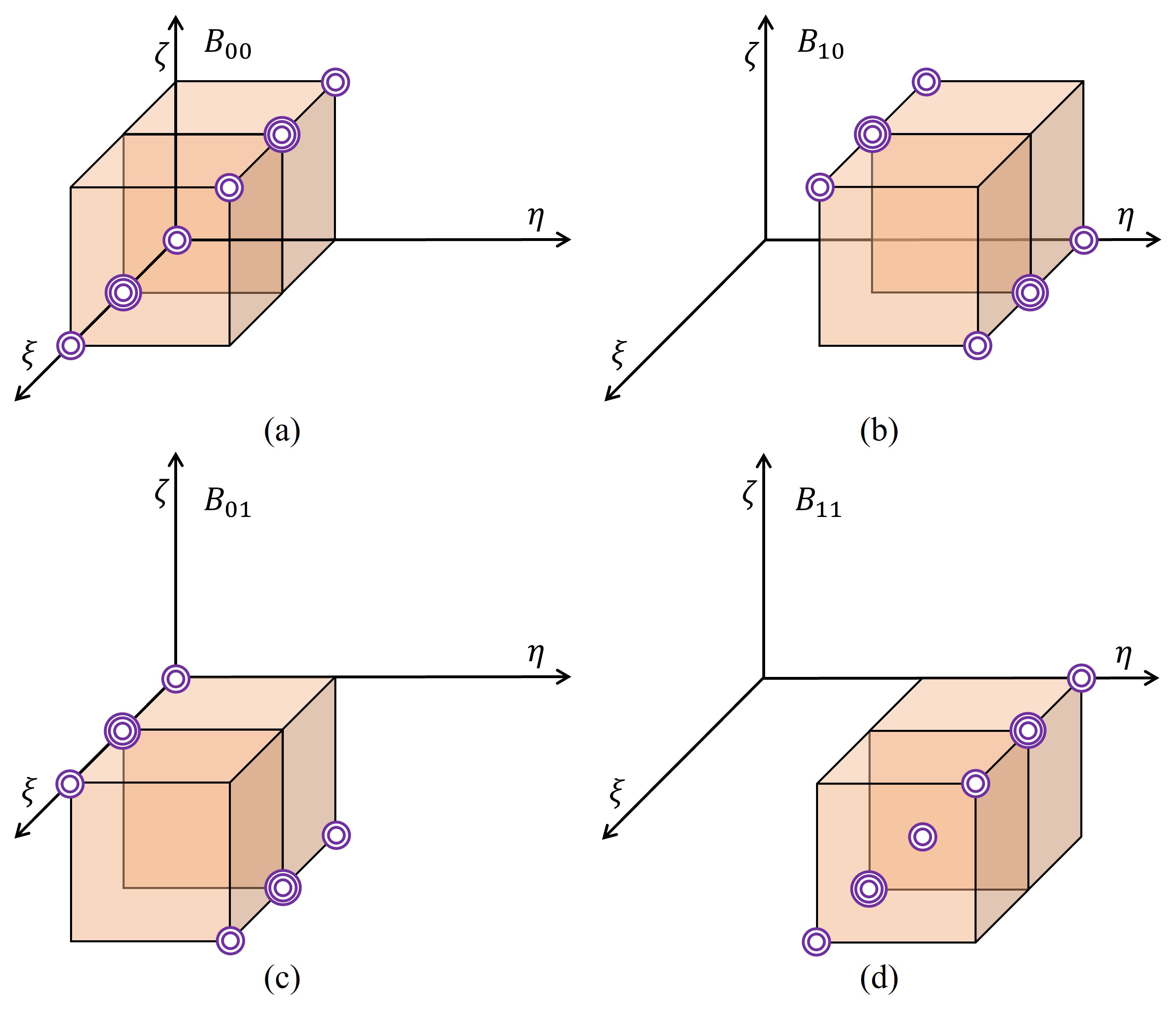}
	\caption{Four types of rectangular operators with weight of 2 and the corresponding excitations.}
	\label{Four rectangular operators}
\end{figure}

Therefore, by concatenating these elementary rectangular operators in this way, one can form a bendable flexible bilayer ribbon operator $W_\gamma$ inside the 3D lattice, as shown in Fig. \ref{Quadrupoles}. The excitations generated by the elementary rectangular operators in the interior of $W_\gamma$ cancel each other, while four excitations remain at each of its two endpoints. The endpoint excitation form a \textbf{quadrupole}. More formally, the flexible bilayer ribbon operator is defined as follows.

\begin{definition}[\textbf{Flexible bilayer ribbon operator}]
	Let $\gamma=(c_0,c_1,\cdots,c_m)$ be a sequence of cell coordinates on a 2D integer lattice, where $c_t=(j_t,k_t)$. If for each $t=0,1,\cdots,m-1$, there exist a direction $\alpha_t\in{\eta,\zeta}$ and a sign $\varepsilon_t\in{+1,-1}$ such that
	$c_{t+1}=c_t+\varepsilon_t\mathbf e_{\alpha_t}$, where $\mathbf e_{\eta}=(1,0)$ and $\mathbf e_{\zeta}=(0,1)$, then $\gamma$ is called a flexible bilayer ribbon path. The corresponding operator
	\begin{equation}
		W_r^Z(\gamma)=\prod\limits_{t=0}^{m}B^Z_{i,j_t,k_t}
	\end{equation}
	is called a flexible bilayer ribbon operator, where $i$ is required to be even.
\end{definition}

\begin{figure}[htbp]
	\centering
	\includegraphics[width=0.4\textwidth]{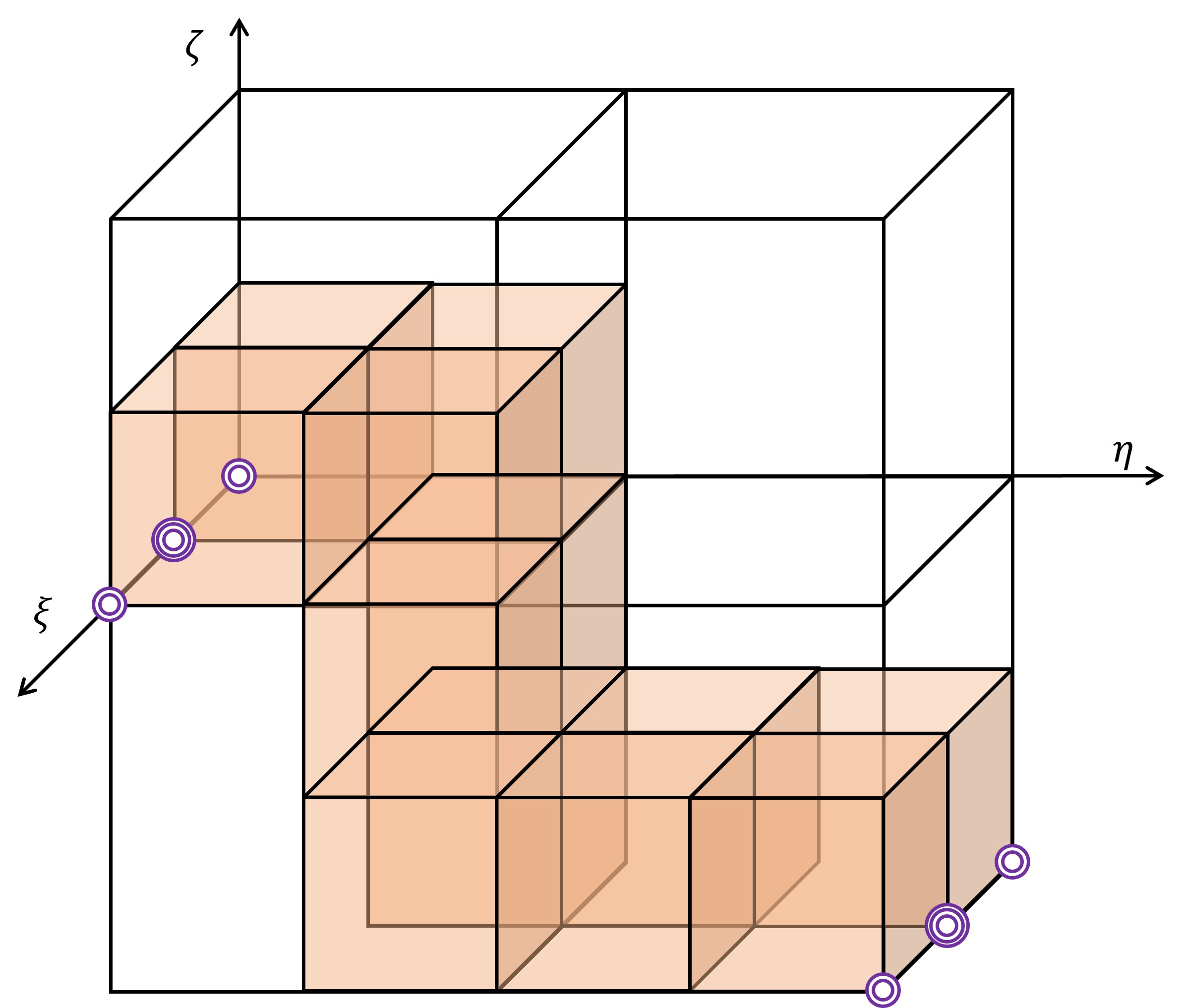}
	\caption{Flexible bilayer ribbon operator and quadrupoles.}
	\label{Quadrupoles}
\end{figure}

\section{The two layers decoding strategy for the 4D Chamon code}
\label{two-layer decoding algorithm}
For a quantum stabilizer code, the most-likely-error decoding problem can be formulated as finding an error estimate $\hat{\mathbf{e}}$ satisfying
\begin{equation}
	\label{IS}
	H\hat{\mathbf{e}}=\mathbf{s}\pmod 2,
\end{equation}
while maximizing the conditional probability
$P(\hat{\mathbf{e}}\mid\mathbf{s})$.
Here, $H$ denotes the parity-check matrix of the stabilizer code,
$\mathbf{s}$ is the syndrome vector, and $\hat{\mathbf{e}}$ is an estimate
of the actual error $\mathbf{e}$. In general, this decoding problem is
NP-hard.

When each column of $H$ has weight at most two, each elementary
error is associated with at most two defects. In this case, the
decoding problem can be mapped to a minimum-weight matching problem on an
ordinary graph and can be solved exactly in polynomial time. Typical
examples include the decoding of surface codes and toric codes.

In contrast, when some columns of $H$ have weight greater than two, a
single elementary error may create more than two defects. A direct generalization of the matching formulation then leads to a minimum-weight matching problem on a hypergraph rather than on an ordinary graph. Such hypergraph matching problems are NP-hard in general, and no general polynomial-time exact algorithm is known.

For the 4D Chamon code, a Pauli $Z$ or $X$ error acting on a physical qubit creates four defects. Therefore, the 4D Chamon code cannot be decoded directly using the conventional MWPM algorithm defined on ordinary graphs. Although BP-OSD is a general-purpose decoding algorithm applicable to a broad class of quantum stabilizer codes, its decoding accuracy for the 4D Chamon code degrades as the code size increases. The goal of this section is therefore to develop a decoding algorithm tailored to the 4D Chamon code that achieves higher decoding accuracy than BP-OSD.

The central idea of the two-layer decoding strategy proposed in this paper is to avoid directly solving the global linear system in Eq.~\eqref{IS}. By jointly exploiting the geometric structure of the 4D Chamon code and the algebraic structure of its parity-check matrix, we show that the original decoding problem can be decomposed into multiple smaller and tractable subproblems. More importantly, these subproblems can be solved independently and therefore processed in parallel, substantially reducing the computational cost of decoding. The solutions of the individual subproblems are then combined to construct a global solution satisfying the original linear system in Eq.~\eqref{IS}. This decomposition relies on two important properties of the 4D Chamon code:

1. \textbf{Symmetry of the distribution of stabilizers}: A Pauli $Z$ or $X$ error acting on a physical qubit flips the measurement outcomes of two stabilizers on each of the two hyperplanes adjacent to the hyperplane containing that qubit. Consequently, the original MWPM problem on hypergraph can be decomposed into a collection of MWPM problems on ordinary graphs, which can be solved independently and in parallel.

2. \textbf{Projection-induced 2D toric-code structure}:  The original syndrome $\mathbf{s}$ can be projected either onto $xy$-planes with fixed $(z,w)$ coordinates or onto $zw$-planes with fixed $(x,y)$ coordinates. Each projected subproblem is equivalent to the problem of decoding Pauli $Y$ errors in a 2D toric code. Therefore, through these projections, the original decoding problem can be decomposed into multiple independent subproblems, each of which can be solved in parallel using a decoding procedure equivalent to that for Pauli $Y$ errors in a 2D toric code.

In the following, these two properties together with the corresponding decoding procedures that exploit them are introduced in detail, thereby leading to the double-layer decoding strategy proposed in this paper. The overall architecture of the strategy is illustrated in Fig. \ref{DD_decoder_flowchart}.
\begin{figure}[htbp]
	\centering
	\includegraphics[width=0.48\textwidth]{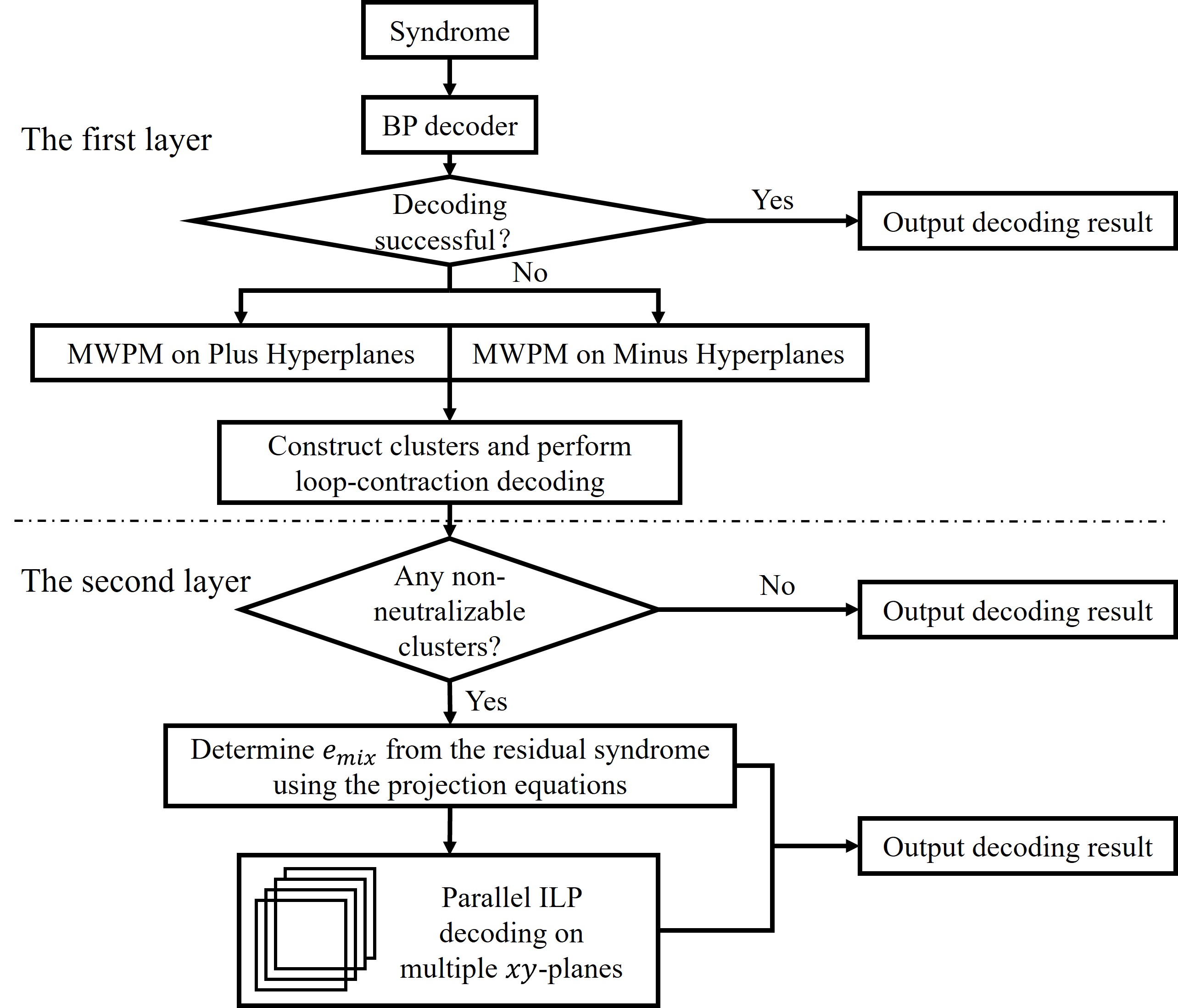}
	\caption{The flowchart of double-layer decoding strategy.}
	\label{DD_decoder_flowchart}
\end{figure}

\subsection{Decoding with symmetries}
\label{decoding with symmetry}
As discussed in Sec.~\ref{Quadrupole}, all physical qubits lie on hyperplanes whose normal vectors $(r_x,r_y,r_z,r_w)$ satisfy
$r_x,r_y,r_z,r_w\in\{\pm1\}$ and constant terms $E$ are even, whereas all stabilizers lie on hyperplanes with odd values of $E$. Moreover, for an isotropic lattice $L_x=L_y=L_z=L_w$, any qubit lying on a hyperplane with even $E$ is incident to four $X$-type or four $Z$-type stabilizers, two of which lie on the hyperplane with constant term $(E-1)\bmod g$, while the other two lie on the hyperplane with constant term $(E+1)\bmod g$. Consequently, the defects on each hyperplane can be paired using the conventional MWPM algorithm on an ordinary graph.

Although performing MWPM independently on each hyperplane can yield candidate erroneous qubits from the resulting matchings, these candidates do not, in general, constitute a valid decoding solution. This is because such a procedure is equivalent to decomposing the linear system in Eq.~\eqref{IS} as
\begin{equation}
	\label{IS_trans}
	H\hat{\mathbf{e}}=\mathbf{s}
	\quad\Longrightarrow\quad
	\begin{bmatrix}
		H_1\mathbf{e}_1\\
		H_2\mathbf{e}_2\\
		\vdots\\
		H_L\mathbf{e}_L
	\end{bmatrix}
	=
	\begin{bmatrix}
		\mathbf{s}_1\\
		\mathbf{s}_2\\
		\vdots\\
		\mathbf{s}_L
	\end{bmatrix},
\end{equation}
and then solving the subproblems
$H_i\mathbf{e}_i=\mathbf{s}_i$ independently for $i=1,\ldots,L$, while simultaneously requiring the consistency condition $\mathbf{e}_1=\mathbf{e}_2=\cdots=\mathbf{e}_L$. However, this global consistency condition cannot, in general, be guaranteed because the decoding problems on different hyperplanes are solved independently. Even exploiting the correlating matching\cite{} cannot guarantee this global consistency condition to be satisfied. Therefore, at this stage, we retain only the matching results obtained by MWPM on each hyperplane, rather than directly using the corresponding inferred erroneous qubits.

The objective of the first layer of decoding is to decompose the original decoding problem into multiple smaller and tractable subproblems while ensuring that the resulting solutions remain valid for the original decoding problem. In this paper, we achieve this by performing MWPM separately on the two families of hyperplanes with normal vectors $(1,1,1,1)$ and $(1,-1,1,-1)$, and then constructing neutralizable clusters from the corresponding matching results. A key task at this stage is to determine whether a given cluster can be neutralized. A neutralizable cluster is defined in Definition~\ref{definition Neutralizable cluster}.

For convenience, we refer to the set of all hyperplanes with normal vector
$(1,1,1,1)$ as the plus hyperplane family, and the set of all hyperplanes
with normal vector $(1,-1,1,-1)$ as the minus hyperplane family.
The normal vectors of the two hyperplane families are orthogonal, since $(1,1,1,1)\cdot(1,-1,1,-1)=0$.

\begin{definition}[\textbf{Neutralizable cluster}]
	\label{definition Neutralizable cluster}
	Let $D$ denote the set of all defects. MWPM is performed independently
	on each hyperplane in the plus and minus hyperplane families, yielding
	two sets of matching edges, denoted by $\mathcal{M}_{+}$ and
	$\mathcal{M}_{-}$, respectively. From these matching results, we construct an auxiliary graph $G_{\mathrm{match}}=\left(D,\mathcal{M}_{+}\cup\mathcal{M}_{-}\right)$, whose vertex set is $D$ and whose edge set is
	$\mathcal{M}_{+}\cup\mathcal{M}_{-}$.
	
	A \textbf{cluster} is defined as a connected component of
	$G_{\mathrm{match}}$. A cluster is \textbf{neutralizable} if there exists a Pauli operator whose syndrome is exactly the set of
	defects contained in that cluster.
\end{definition}

The construction of clusters is the key step that enables the original
decoding problem to be decomposed into multiple smaller subproblems using
the two sets of matching results. Determining whether each resulting
subproblem admits a valid solution is therefore equivalent to determining
whether the corresponding cluster is neutralizable. In the following,
we derive an algebraic criterion for testing the neutralizability of a
cluster.

\begin{corollary}[\textbf{Neutralizability criterion for a cluster}]
	\label{Neutralizable cluster}
	Let $C^X$ and $C^Z$ be clusters consisting of $X$-type and $Z$-type defects, respectively, with corresponding syndrome vectors $\mathbf{s}^x$ and $\mathbf{s}^z$. If $C^X$ and $C^Z$ are neutralizable, then their corresponding syndrome vectors necessarily satisfy
	\begin{equation}
		\left[
		N_1\otimes N_2[0:m_4],\,
		N_1\otimes N_2[m_4:m_4+m_3]
		\right]\mathbf{s}^x=\mathbf{0}
	\end{equation}
	and
	\begin{equation}
		\left[
		N_4[0:m_1]\otimes N_3,\,
		N_4[m_1:m_1+m_2]\otimes N_3
		\right]\mathbf{s}^z=\mathbf{0},
	\end{equation}
	respectively. Here, $N_1$, $N_2$, $N_3$, and $N_4$ are matrices whose rows form bases for the kernels of $\begin{bmatrix}
		H_{x_1}\\
		H_{z_1}
	\end{bmatrix}$, $[H_{x_2}^{T},H_{z_2}^{T}]$, $\begin{bmatrix}
		H_{x_2}\\
		H_{z_2}
	\end{bmatrix}$ and $[H_{x_1}^{T},H_{z_1}^{T}]$, respectively. The notation $N_i[a:b]$ denotes the submatrix of $N_i$ consisting of
	columns with indices $a,a+1,\ldots,b-1$, where the column indices start from 0.
\end{corollary}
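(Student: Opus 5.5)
The criterion follows from one fact: a syndrome is achievable only if it is annihilated by every vector in the left kernel of the relevant parity-check matrix. If $C^X$ is neutralizable, there is a $Z$-type error $\mathbf{e}$ with $H_X\mathbf{e}=\mathbf{s}^x$. Here $H_X$ is the $X$-part of the stabilizer matrix in Eq.~\eqref{4DXYZ stabilizer_CSS}, namely the rows $T$ and $U$. Hence any row vector $\mathbf{u}$ with $\mathbf{u}H_X=\mathbf{0}$ gives $\mathbf{u}\mathbf{s}^x=\mathbf{u}H_X\mathbf{e}=0$. The proof therefore reduces to checking that every row of the stated matrix $[N_1\otimes N_2[0:m_4],\,N_1\otimes N_2[m_4:m_4+m_3]]$ lies in the left kernel of $H_X$. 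The $Z$-type statement follows by the same argument applied to $H_Z=\begin{bmatrix}S\\V\end{bmatrix}$.

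\textbf{Setting up the blocks.} I will write the $X$-part explicitly. Ordering the syndrome as ($T$ rows, indexed by $C_0\otimes\widetilde C_1$, then $U$ rows, indexed by $C_0\otimes\widetilde C_{-1}$), the transpose $H_X^T$ has a block row for each qubit block $A,\dots,E$:
\begin{equation*}
H_X^T=\begin{bmatrix}
H_{z_1}\otimes I_{m_4} & 0\\
0 & H_{z_1}\otimes I_{m_3}\\
I_{n_A}\otimes H_{x_2}^T & I_{n_A}\otimes H_{z_2}^T\\
H_{x_1}\otimes I_{m_4} & 0\\
0 & H_{x_1}\otimes I_{m_3}
\end{bmatrix}.
\end{equation*}
A candidate left-kernel vector is $\mathbf{u}=(\mathbf{a}\otimes\mathbf{b}_1,\ \mathbf{a}\otimes\mathbf{b}_2)$, where $\mathbf{a}$ is a row of $N_1$ and $\mathbf{b}=(\mathbf{b}_1,\mathbf{b}_2)$ is a row of $N_2$ split after $m_4$ columns. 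This is exactly the column split $N_2[0:m_4]$, $N_2[m_4:m_4+m_3]$ in the statement.

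\textbf{Verifying the kernel condition.} Using the mixed-product rule, $H_X\mathbf{u}^T$ splits into five components.
\begin{itemize}
\item $A$-block: $(H_{z_1}^T\otimes I)(\mathbf{a}^T\otimes\mathbf{b}_1^T)=H_{z_1}^T\mathbf{a}^T\otimes\mathbf{b}_1^T$.
\item $B$-block: $H_{z_1}^T\mathbf{a}^T\otimes\mathbf{b}_2^T$.
\item $C$-block: $\mathbf{a}^T\otimes(H_{x_2}\mathbf{b}_1^T+H_{z_2}\mathbf{b}_2^T)$.
\item $D$-block: $H_{x_1}^T\mathbf{a}^T\otimes\mathbf{b}_1^T$.
\item $E$-block: $H_{x_1}^T\mathbf{a}^T\otimes\mathbf{b}_2^T$.
\end{itemize}
All five vanish provided $\mathbf{a}$ annihilates both $H_{x_1}$ and $H_{z_1}$ (in the transposed sense) and $H_{x_2}\mathbf{b}_1^T+H_{z_2}\mathbf{b}_2^T=0$. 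The second condition says $\mathbf{b}^T\in\ker[H_{x_2},H_{z_2}]$.

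\textbf{Main obstacle: the transpose conventions.} The step I expect to need care is reconciling these conditions with how the statement defines the kernels of $\begin{bmatrix}H_{x_1}\\H_{z_1}\end{bmatrix}$ and $[H_{x_2}^T,H_{z_2}^T]$. Dimensions force the first to be read as a left kernel, i.e.\ $\mathbf{a}H_{x_1}=\mathbf{a}H_{z_1}=0$ on $C_0$-indexed vectors. The second should correspondingly be read as vectors $\mathbf{b}$ with $\mathbf{b}[H_{x_2}^T;H_{z_2}^T]$-type vanishing, matching $H_{x_2}\mathbf{b}_1^T+H_{z_2}\mathbf{b}_2^T=0$. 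I will fix the shapes explicitly before checking: $H_{x_1}$ is $m_2\times n_A$, $H_{z_1}$ is $m_1\times n_A$, and similarly for the second code. I would check the block signs, which are trivial over $\mathbb F_2$, and the index ranges against Eq.~\eqref{4DXYZ stabilizer_CSS}. Once the $X$ case is done, the $Z$ case is symmetric: $N_4$ plays the role of $N_2$, split after $m_1$ columns, and $N_3$ that of $N_1$, with the tensor factors in swapped order because $S$ and $V$ are indexed by $C_{\mp1}\otimes\widetilde C_0$. Only necessity is claimed, so no converse argument is needed.
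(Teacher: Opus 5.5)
Your strategy is the same as the paper's. Neutralizability gives $\mathbf{s}^x=H_x\mathbf{e}$, so it suffices that the stated matrix annihilates $H_x$ from the left. The paper only displays that product and asserts it vanishes, so your blockwise check is where the real content lies. As written, it has two problems.

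First, the transposes in your five components are reversed. You wrote $H_X^T$ correctly but then multiplied the blocks of $H_X$ into $\mathbf{u}^T$. This produces $H_{z_1}^T\mathbf{a}^T$ and $H_{x_2}\mathbf{b}_1^T$, which are undefined: $H_{z_1}^T$ is $n_A\times m_1$ while $\mathbf{a}$ has length $n_A$, and $H_{x_2}$ is $m_4\times n_B$ while $\mathbf{b}_1$ has length $m_4$. Computing $\mathbf{u}H_X$ directly gives the blocks $\mathbf{a}H_{z_1}^T\otimes\mathbf{b}_1$, $\mathbf{a}H_{z_1}^T\otimes\mathbf{b}_2$, $\mathbf{a}\otimes(\mathbf{b}_1H_{x_2}+\mathbf{b}_2H_{z_2})$, $\mathbf{a}H_{x_1}^T\otimes\mathbf{b}_1$ and $\mathbf{a}H_{x_1}^T\otimes\mathbf{b}_2$. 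What is needed is therefore $H_{x_1}\mathbf{a}^T=H_{z_1}\mathbf{a}^T=\mathbf{0}$ and $[H_{x_2}^T,H_{z_2}^T]\mathbf{b}^T=\mathbf{0}$. These are exactly the ordinary (right) kernels named in the statement, and match how $N_1$ is used in Proposition~\ref{projection equivalence}. Your ``left kernel'' reading $\mathbf{a}H_{x_1}=\mathbf{0}$ is dimensionally impossible for a $C_0$-indexed $\mathbf{a}$, and $\ker[H_{x_2},H_{z_2}]$ is the wrong space. No reinterpretation is needed. The split after $m_4$ is correct because the $T$ rows, indexed by $\widetilde C_1$, carry $H_{x_2}$ in the $C$-block.

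Second, ``the $Z$ case is symmetric'' hides a real mismatch with the statement. Take $\mathbf{v}=(\mathbf{c}_1\otimes\mathbf{d},\,\mathbf{c}_2\otimes\mathbf{d})$, where $\mathbf{c}=(\mathbf{c}_1,\mathbf{c}_2)$ is a row of $N_4$ split after $m_1$ columns and $\mathbf{d}$ is a row of $N_3$. The $A,B,D,E$ blocks of $\mathbf{v}H_Z$ vanish because $\mathbf{d}H_{x_2}^T=\mathbf{d}H_{z_2}^T=\mathbf{0}$. The $C$-block, however, is $(\mathbf{c}_1H_{z_1}+\mathbf{c}_2H_{x_1})\otimes\mathbf{d}$, because the $S$ rows (indexed by $C_{-1}$, i.e.\ the first $m_1$ coordinates) carry $H_{z_1}\otimes I_{n_B}$. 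So you need $[H_{z_1}^T,H_{x_1}^T]\mathbf{c}^T=\mathbf{0}$. The statement instead draws $N_4$ from $\ker[H_{x_1}^T,H_{z_1}^T]$, whose first block is $C_1$-indexed and pairs with $H_{x_1}$. These are block-swapped subspaces, and they need not coincide even when $m_1=m_2$ as for toric codes. For example, take $L_x=L_y\ge 3$ and let vertex $(i,j)$ and the face with lower-left corner $(i,j)$ share an index. Then an anti-diagonal line of vertices, paired with the anti-diagonal line of faces it passes through, lies in the stated kernel but not in the required one. Your argument, like the paper's (which never expands the product), therefore proves the $Z$-criterion only with $N_4$ taken from $\ker[H_{z_1}^T,H_{x_1}^T]$. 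The slicing at $m_1$ suggests that is the intended form. You should state and prove that corrected version rather than claim the literal statement follows by symmetry.
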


\begin{proof}
	We first consider an $X$-type defect cluster $C^X$. If $C^X$ is
	neutralizable, then its syndrome vector $\mathbf{s}^x$ must belong
	to the image of the parity-check matrix $H_x$, i.e., $\mathbf{s}^x\in\operatorname{Im}(H_x)$. Hence, there exists an error vector $\mathbf e$ such that $\mathbf{s}^x=H_x\mathbf e$. It is therefore sufficient to show that
	\begin{equation}
		\left[
		N_1\otimes N_2[0:m_4],\,
		N_1\otimes N_2[m_4:m_4+m_3]
		\right]H_x=\mathbf{0}.
	\end{equation}
	Substituting the block-matrix expression for $H_x$ from	Eq.~\eqref{4DXYZ stabilizer_CSS}, we obtain
	\begin{widetext}
		\begin{equation}
				\left[
				N_1\otimes N_2[0:m_4],\,
				N_1\otimes N_2[m_4:m_4+m_3]
				\right]
				\begin{bmatrix}
					H_{z_1}^{T}\otimes I_{m_4}
					& \mathbf{0}
					& I_{n_A}\otimes H_{x_2}
					& H_{x_1}^{T}\otimes I_{m_4}
					& \mathbf{0}\\
					\mathbf{0}
					& H_{z_1}^{T}\otimes I_{m_3}
					& I_{n_A}\otimes H_{z_2}
					& \mathbf{0}
					& H_{x_1}^{T}\otimes I_{m_3}
				\end{bmatrix}=\mathbf{0}.
		\end{equation}
	\end{widetext}
	
	The same argument applies to a $Z$-type defect cluster $C^Z$.
	Substituting the block-matrix expression for $H_z$ from	Eq.~\eqref{4DXYZ stabilizer_CSS} gives
	\begin{widetext}
	\begin{equation}
			\left[
			N_4[0:m_1]\otimes N_3,\,
			N_4[m_1:m_1+m_2]\otimes N_3
			\right]\begin{bmatrix}
				I_{m_1}\otimes H_{x_2}^{T}
				& I_{m_1}\otimes H_{z_2}^{T}
				& H_{z_1}\otimes I_{n_B}
				& \mathbf{0}
				& \mathbf{0}\\
				\mathbf{0}
				& \mathbf{0}
				& H_{x_1}\otimes I_{n_B}
				& I_{m_2}\otimes H_{x_2}^{T}
				& I_{m_2}\otimes H_{z_2}^{T}
			\end{bmatrix}=\mathbf{0}.
	\end{equation}
	\end{widetext}
	Thus, every neutralizable $X$-type or $Z$-type cluster satisfies the
	corresponding constraint, which completes the proof.
\end{proof}

\textbf{Corollary}~\ref{Neutralizable cluster} provides an efficient criterion for testing the neutralizability of a cluster. Moreover, the matrices $N_1$, $N_2$, $N_3$, and $N_4$ can be precomputed and therefore introduce no additional computational overhead during decoding. It should be emphasized, however, that the clusters obtained from the matching procedure are not necessarily neutralizable. The reason for this will be discussed in detail in Sec.~\ref{4D Chamon code projection}. Non-neutralizable clusters are handled by the second layer of the decoder. For each neutralizable cluster, we construct a plus-minus alternating cycle and decode the cluster by progressively contracting this cycle. The formal definition of a plus-minus alternating cycle is given below.

\begin{definition}[\textbf{Plus-minus alternating cycle}]
	Let $C$ be a neutralizable cluster constructed from plus-matching and
	minus-matching edges. Denote the set of defects in $C$ by $V_C$, the set
	of plus-matching edges by $\mathcal{M}^{+}_C$, and the set of
	minus-matching edges by $\mathcal{M}^{-}_C$. The matching graph associated
	with $C$ is defined as $G_C=
		\left(
		V_C,\,
		\mathcal{M}^{+}_C\cup\mathcal{M}^{-}_C
		\right)$. A simple cycle $\gamma=
		\left(
		v_0,v_1,\ldots,v_{2m-1},v_0
		\right)$
	in $G_C$ is called a \emph{plus-minus alternating cycle} if its edges $e_i=(v_i,v_{i+1}),\ i=0,1,\ldots,2m-1$, where the indices are taken modulo $2m$, strictly alternate between plus-matching and minus-matching edges. Equivalently,
	\begin{equation}
		e_i\in\mathcal{M}^{+}_C
		\Longleftrightarrow
		e_{i+1}\in\mathcal{M}^{-}_C,\ i\bmod 2m.
	\end{equation}
\end{definition}

From the chain-complex representation of CSS stabilizer codes, a syndrome
generated by a set of physical-qubit errors can be regarded as the image of
the corresponding error support under the appropriate boundary map. Hence,
for a neutralizable $X$-type or $Z$-type cluster, there exists a set of
physical-qubit errors whose syndrome is exactly the set of defects contained
in that cluster. In the matching representation introduced above, such a
neutralizable cluster can be organized into a closed plus-minus alternating
cycle connecting its defects. Accordingly, the first-layer decoding of a
neutralizable cluster can be interpreted as progressively contracting the
corresponding plus-minus alternating cycle until all defects are annihilated.

Fig. \ref{plus_minus_cycle}(a) and (b) show an example of the plus-matching and minus-matching results of the 4D Chamon code on a 2D $xy$-plane, respectively. Fig. \ref{plus_minus_cycle}(c) illustrates how a plus-minus alternating cycle of a neutralizable cluster is constructed from the two matching results. The physical qubits enclosed by this cycle correspond to the erroneous qubits.

\begin{figure*}[htbp]
	\centering
	\includegraphics[width=0.8\textwidth]{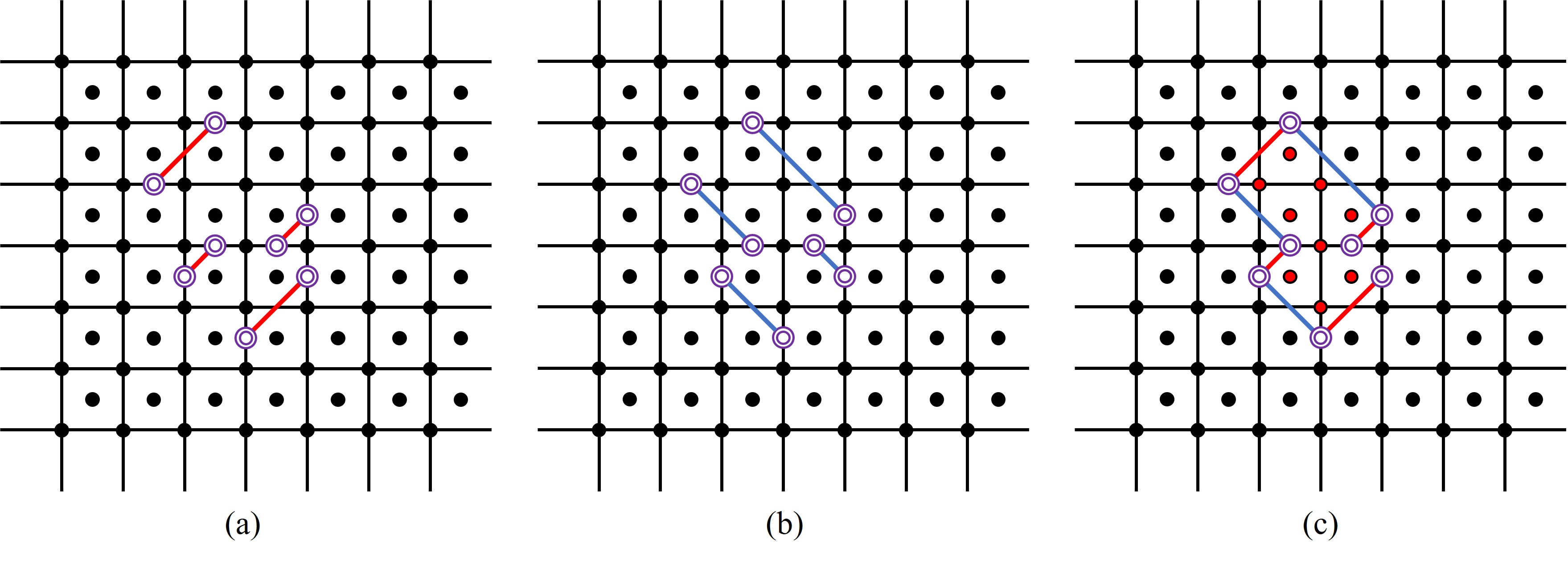}
	\caption{Constructing plus-minus alternating cycle on a 2D $xy$-plane according to the matching results of plus and minus hyperplanes.}
	\label{plus_minus_cycle}
\end{figure*}

\subsection{Decoding with projection}
\label{4D Chamon code projection}
This section demonstrates that the 4D Chamon code exhibits a projection-induced 2D toric-code structure from the perspective of its linear-system formulation. As can be seen from Eq.~\eqref{4DXYZ stabilizer_CSS}, the parity-check matrices $H_x$ and $H_z$ have analogous algebraic structures. Therefore, we consider only the correction of Pauli $Z$ errors in the following.

For Pauli $Z$ errors, the decoding problem is equivalent to solving the linear system $H_x\mathbf{e}_z=\mathbf{s}\pmod 2$. According to the geometric distribution of the physical qubits, the error vector $\mathbf{e}_z$ can be decomposed as
\begin{equation}
	\mathbf{e}_z
	=
	\left(
	\mathbf{e}_{xy},
	\mathbf{e}_{hc},
	\mathbf{e}_{mix},
	\mathbf{e}_{v},
	\mathbf{e}_{zw}
	\right)^T,
\end{equation}
where $\mathbf{e}_{xy}$, $\mathbf{e}_{hc}$, $\mathbf{e}_{v}$, and $\mathbf{e}_{zw}$ denote the error vectors associated with qubits located at the centers of $xy$-faces, at the centers of hypercubes, at vertices, and at the centers of $zw$-faces, respectively. The vector $\mathbf{e}_{mix}$ represents errors on qubits located at the centers of $xz$-, $xw$-, $yz$-, and $yw$-faces. In the following, these four types of faces are collectively referred to as mixed faces.

In addition, the syndrome vector $\mathbf{s}$ can be decomposed into two components, denoted by $\mathbf{s}_{1}$ and $\mathbf{s}_{2}$. Using the block-matrix expression for $H_x$ in Eq.~\eqref{4DXYZ stabilizer_CSS}, the decoding equation can be written as
\begin{widetext}
	\begin{equation}
		\label{decoding Hx 1}
		\begin{bmatrix}
			H_{z_1}^T\otimes I_{m_4} & \mathbf{0} & I_{n_A}\otimes H_{x_2} & H_{x_1}^T\otimes I_{m_4} & \mathbf{0}\\
			\mathbf{0} & H_{z_1}^T\otimes I_{m_3} & I_{n_A}\otimes H_{z_2} & \mathbf{0} & H_{x_1}^T\otimes I_{m_3}
		\end{bmatrix}
		\begin{bmatrix}
			\mathbf{e}_{xy}\\
			\mathbf{e}_{hc}\\
			\mathbf{e}_{mix}\\
			\mathbf{e}_{v}\\
			\mathbf{e}_{zw}
		\end{bmatrix}
		=
		\begin{bmatrix}
			\mathbf{s}_{1}\\
			\mathbf{s}_{2}
		\end{bmatrix}.
	\end{equation}
\end{widetext}

As can be seen from Eq.~\eqref{decoding Hx 1}, the error vector $\mathbf{e}_{mix}$ contributes to both syndrome components $\mathbf{s}_{1}$ and $\mathbf{s}_{2}$ and therefore couples the two subproblems. If errors on the mixed faces are temporarily neglected, i.e., $\mathbf{e}_{mix}=\mathbf{0}$, Eq.~\eqref{decoding Hx 1} reduces to
\begin{widetext}
	\begin{equation}
		\label{decoding Hx 2}
		\begin{bmatrix}
			H_{z_1}^T\otimes I_{m_4} & \mathbf{0} & H_{x_1}^T\otimes I_{m_4} & \mathbf{0}\\
			\mathbf{0} & H_{z_1}^T\otimes I_{m_3} & \mathbf{0} & H_{x_1}^T\otimes I_{m_3}
		\end{bmatrix}
		\begin{bmatrix}
			\mathbf{e}_{xy}\\
			\mathbf{e}_{hc}\\
			\mathbf{e}_{v}\\
			\mathbf{e}_{zw}
		\end{bmatrix}
		=
		\begin{bmatrix}
			\mathbf{s}_{1}\\
			\mathbf{s}_{2}
		\end{bmatrix}.
	\end{equation}
\end{widetext}

Equation~\eqref{decoding Hx 2} can then be decomposed into the following two independent subproblems:
\begin{equation}
	\label{decoding Hx 3}
	\left[
	H_{z_1}^T\otimes I_{m_4},\,
	H_{x_1}^T\otimes I_{m_4}
	\right]
	\begin{bmatrix}
		\mathbf{e}_{xy}\\
		\mathbf{e}_{v}
	\end{bmatrix}
	=
	\mathbf{s}_{1},
\end{equation}
and
\begin{equation}
	\label{decoding Hx 4}
	\left[
	H_{z_1}^T\otimes I_{m_3},\,
	H_{x_1}^T\otimes I_{m_3}
	\right]
	\begin{bmatrix}
		\mathbf{e}_{hc}\\
		\mathbf{e}_{zw}
	\end{bmatrix}
	=
	\mathbf{s}_{2}.
\end{equation}

This indicates that, in the absence of mixed-face errors, the original
decoding problem in four dimensions can be decomposed into two families of
2D projected subproblems. Specifically, the matrix $\left[
H_{z_1}^{T}\otimes I_{m_4},\,
H_{x_1}^{T}\otimes I_{m_4}
\right]$ in Eq.~\eqref{decoding Hx 3} admits the following physical interpretation: the roles of stabilizers and physical qubits in the original 2D toric code $C_1$ are interchanged, and the resulting structure is replicated $m_4$ times. In the original toric code, edges correspond to physical qubits, whereas vertices and faces correspond to stabilizers. After this interchange, edges correspond to stabilizers, whereas vertices and faces correspond to physical qubits. The resulting structure can therefore be regarded as a collection of independent $xy$-planes, denoted by
$\Pi_{xy}(z_0,w_0)$. Within each $xy$-plane with fixed $(z_0,w_0)$,
errors on physical qubits flip the measurement outcomes of four stabilizers
within the same plane, while different $xy$-planes are
decoupled from one another. Similarly, the matrix $\left[
H_{z_1}^{T}\otimes I_{m_3},\,
H_{x_1}^{T}\otimes I_{m_3}
\right]$ in Eq.~\eqref{decoding Hx 4} corresponds to another collection of independent $xy$-planes. Therefore, when $\mathbf{e}_{mix}=\mathbf{0}$, the decoding problem for Pauli $Z$ errors can be decomposed into a collection of independent 2D toric-code-like subproblems defined on $xy$-planes.

On the other hand, if only mixed-plane errors are considered, i.e.,
$\mathbf{e}_{xy}
=
\mathbf{e}_{hc}
=
\mathbf{e}_{v}
=
\mathbf{e}_{zw}
=
\mathbf{0}$, then Eq.~\eqref{decoding Hx 1} reduces to
\begin{equation}
	\label{decoding Hx 5}
	\begin{bmatrix}
		I_{n_A}\otimes H_{x_2}\\
		I_{n_A}\otimes H_{z_2}
	\end{bmatrix}
	\mathbf{e}_{mix}
	=
	\begin{bmatrix}
		\mathbf{s}_1\\
		\mathbf{s}_2
	\end{bmatrix}.
\end{equation}
In this case, the syndrome generated by $\mathbf{e}_{mix}$ can be decomposed
into a collection of $zw$-planes $\Pi_{zw}(x_0,y_0)$ with fixed
$(x_0,y_0)$. More specifically, the matrix $\begin{bmatrix}
	I_{n_A}\otimes H_{x_2}\\
	I_{n_A}\otimes H_{z_2}
\end{bmatrix}$ can be interpreted as $n_A$ independent copies of the original toric code $C_2$, giving rise to a collection of independent 2D $zw$-planes $\Pi_{zw}(x_0,y_0)$. Errors on physical qubits within a given plane flip the measurement outcomes of four stabilizers within the same plane, while different $zw$-planes are decoupled from one another. Thus, mixed-face errors also give rise to a collection of
independent 2D toric-code-like subproblems, except that the projection is
onto $zw$-planes rather than $xy$-planes.

The decoding problem for Pauli $X$ errors can be analyzed in exactly the same manner. In this case, the linear system $H_z\mathbf{e}_x=\mathbf{s}\pmod 2$ equivalent to the decoding problem can be also written as
\begin{widetext}
	\begin{equation}
		\label{decoding Hz 1}
		\begin{bmatrix}
			I_{m_1}\otimes H_{x_2}^{T}
			&
			I_{m_1}\otimes H_{z_2}^{T}
			&
			H_{z_1}\otimes I_{n_B}
			&
			\mathbf{0}
			&
			\mathbf{0}
			\\
			\mathbf{0}
			&
			\mathbf{0}
			&
			H_{x_1}\otimes I_{n_B}
			&
			I_{m_2}\otimes H_{x_2}^{T}
			&
			I_{m_2}\otimes H_{z_2}^{T}
		\end{bmatrix}
		\begin{bmatrix}
			\mathbf{e}_{xy}\\
			\mathbf{e}_{hc}\\
			\mathbf{e}_{mix}\\
			\mathbf{e}_{v}\\
			\mathbf{e}_{zw}
		\end{bmatrix}
		=
		\begin{bmatrix}
			\mathbf{s}_1\\
			\mathbf{s}_2
		\end{bmatrix}.
	\end{equation}
\end{widetext}
Similarly, let $\mathbf{e}_{mix}=\mathbf{0}$, Eq.~(\ref{decoding Hz 1}) reduces to
\begin{widetext}
	\begin{equation}
		\label{decoding Hz 2}
		\begin{bmatrix}
			I_{m_1}\otimes H_{x_2}^{T}
			&
			I_{m_1}\otimes H_{z_2}^{T}
			&
			\mathbf{0}
			&
			\mathbf{0}
			\\
			\mathbf{0}
			&
			\mathbf{0}
			&
			I_{m_2}\otimes H_{x_2}^{T}
			&
			I_{m_2}\otimes H_{z_2}^{T}
		\end{bmatrix}
		\begin{bmatrix}
			\mathbf{e}_{xy}\\
			\mathbf{e}_{hc}\\
			\mathbf{e}_{v}\\
			\mathbf{e}_{zw}
		\end{bmatrix}
		=
		\begin{bmatrix}
			\mathbf{s}_1\\
			\mathbf{s}_2
		\end{bmatrix}.
	\end{equation}
\end{widetext}
The two subequations in Eq.~(\ref{decoding Hz 2}) correspond to collections of mutually independent decoding problems on $zw$-planes. Specifically, the matrix $\left[
I_{m_1}\otimes H_{x_2}^{T},\ 
I_{m_1}\otimes H_{z_2}^{T}
\right]$
can be interpreted as the structure obtained by interchanging the roles of stabilizers and physical qubits in the original 2D toric code $C_2$ and then replicating the resulting structure $m_1$ times. It therefore corresponds to a collection of mutually independent $zw$-planes $\Pi_{zw}(x_0,y_0)$. Likewise, the matrix $\left[
I_{m_2}\otimes H_{x_2}^{T},\ 
I_{m_2}\otimes H_{z_2}^{T}
\right]$ corresponds to another collection of mutually independent $zw$-planes. Therefore, in the absence of mixed-face errors, the decoding problem for Pauli $X$ errors can likewise be decomposed into multiple mutually independent 2D toric-code-like subproblems.

Furthermore, if only mixed-face errors are considered, i.e., $\textbf{e}_{xy}=\textbf{e}_{hc}=\textbf{e}_{v}=\textbf{e}_{zw}=\textbf{0}$, then Eq.~(\ref{decoding Hz 1}) reduces to
\begin{equation}
	\label{decoding Hz 3}
	\begin{bmatrix}
		H_{z_1}\otimes I_{n_B}\\
		H_{x_1}\otimes I_{n_B}
	\end{bmatrix}
	\mathbf{e}_{mix}
	=
	\begin{bmatrix}
		\mathbf{s}_1\\
		\mathbf{s}_2
	\end{bmatrix}.
\end{equation}
This equation shows that Pauli $X$ errors on the mixed planes can likewise be projected onto a collection of independent $xy$-plane subproblems. In particular, the matrix
$\begin{bmatrix}
	H_{z_1}\otimes I_{n_B}\\
	H_{x_1}\otimes I_{n_B}
\end{bmatrix}$ can be interpreted physically as $n_B$ independent copies of the original toric code $C_1$, giving rise to a collection of mutually independent 2D $xy$-planes $\Pi_{xy}(z_0,w_0)$. Errors on physical qubits within a given plane flip four stabilizers within the same plane, while different $xy$-planes do not affect one another.

In summary, for the 4D Chamon code, the syndromes generated by both Pauli $Z$ and Pauli $X$ errors exhibit a projection-induced 2D toric-code structure. Specifically, once the relevant error components are decoupled, the original decoding problem can be decomposed into a collection of mutually independent 2D toric-code-like subproblems defined on either $xy$- or $zw$-planes.

In the general case, however, the syndrome contributions generated by mixed-face errors overlap with those generated by errors at the remaining qubit locations, thereby coupling the projected decoding problems associated with different 2D planes. Consequently, the origin of a given defect may become ambiguous: it may be associated with either an $xy$-plane subproblem or a $zw$-plane subproblem, particularly at relatively high physical error rates. This ambiguity is also one of the main reasons why the construction of plus-minus alternating cycles described in Sec.~\ref{decoding with symmetry} may occasionally fail.

We next establish the central theoretical result underlying the second layer of the decoder. By first solving for the mixed-face error component $\mathbf{e}_{mix}$, the contribution of mixed-face errors can be separated from that of errors at the remaining qubit locations. Consequently, the original 4D decoding problem can be decomposed into a collection of mutually independent 2D toric-code-like subproblems defined on $xy$-planes.

\begin{proposition}[\textbf{Equivalence between the solvability of the projected and original decoding equations}]
	\label{projection equivalence}
	Since the $X$-type and $Z$-type parity-check matrices $H_x$ and $H_z$
	of the 4D Chamon code have analogous algebraic structures, we restrict
	our discussion to $H_x$ to avoid repetition. All linear systems below
	are considered over $\mathbb{F}_2$. Equation~\eqref{decoding Hx 1}
	can be rewritten as
	\begin{equation}
		\label{primal equation}
		A\mathbf{e}_0+C\mathbf{e}_{mix}
		=
		\begin{bmatrix}
			\mathbf{s}_1\\
			\mathbf{s}_2
		\end{bmatrix},
	\end{equation}
	where
	\begin{equation}
		\mathbf{e}_0=
		\left(
		\mathbf{e}_{xy},
		\mathbf{e}_{hc},
		\mathbf{e}_{v},
		\mathbf{e}_{zw}
		\right)^T,
	\end{equation}
	and
	\begin{equation}
		A=
		\begin{bmatrix}
			H_{z_1}^{T}\otimes I_{m_4}
			& \mathbf{0}
			& H_{x_1}^{T}\otimes I_{m_4}
			& \mathbf{0}
			\\
			\mathbf{0}
			& H_{z_1}^{T}\otimes I_{m_3}
			& \mathbf{0}
			& H_{x_1}^{T}\otimes I_{m_3}
		\end{bmatrix},
	\end{equation}
	\begin{equation}
		C=
		\begin{bmatrix}
			I_{n_A}\otimes H_{x_2}\\
			I_{n_A}\otimes H_{z_2}
		\end{bmatrix}.
	\end{equation}
	
	Let
	\begin{equation}
		M=
		\begin{bmatrix}
			H_{x_1}\\
			H_{z_1}
		\end{bmatrix},
	\end{equation}
	and let the rows of $N_1$ form a basis for $\ker M$. Define
	\begin{equation}
		P=
		\begin{bmatrix}
			N_1\otimes I_{m_4}
			& \mathbf{0}
			\\
			\mathbf{0}
			& N_1\otimes I_{m_3}
		\end{bmatrix}.
	\end{equation}
	Then Eq.~\eqref{primal equation} is solvable if and only if the
	projected equation
	\begin{equation}
		PC\mathbf{e}_{mix}
		=
		P
		\begin{bmatrix}
			\mathbf{s}_1\\
			\mathbf{s}_2
		\end{bmatrix}
	\end{equation}
	is solvable. Equivalently,
	\begin{equation}
		\label{projection equation}
		\begin{bmatrix}
			N_1\otimes H_{x_2}\\
			N_1\otimes H_{z_2}
		\end{bmatrix}
		\mathbf{e}_{mix}
		=
		\begin{bmatrix}
			N_1\otimes I_{m_4}
			& \mathbf{0}
			\\
			\mathbf{0}
			& N_1\otimes I_{m_3}
		\end{bmatrix}
		\begin{bmatrix}
			\mathbf{s}_1\\
			\mathbf{s}_2
		\end{bmatrix}
	\end{equation}
	has a solution if and only if Eq.~\eqref{primal equation} has a solution.
\end{proposition}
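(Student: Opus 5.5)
The plan is to reduce the statement to a single linear-algebra identity: the image of $A$ equals the kernel of $P$. Once that holds, both directions are immediate. Eq.~\eqref{primal equation} is solvable if and only if there is some $\mathbf{e}_{mix}$ with $\mathbf{s}-C\mathbf{e}_{mix}\in\operatorname{Im}(A)$, where $\mathbf{s}=(\mathbf{s}_1,\mathbf{s}_2)^T$. The projected equation $PC\mathbf{e}_{mix}=P\mathbf{s}$ is solvable if and only if there is some $\mathbf{e}_{mix}$ with $\mathbf{s}-C\mathbf{e}_{mix}\in\ker(P)$. So it suffices to prove $\operatorname{Im}(A)=\ker(P)$. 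Expanding $PC=\begin{bmatrix}N_1\otimes H_{x_2}\\ N_1\otimes H_{z_2}\end{bmatrix}$ by the mixed-product rule $(N_1\otimes I)(I_{n_A}\otimes H)=N_1\otimes H$ then gives the explicit form \eqref{projection equation}.

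The first step is to note that $A$ is block diagonal, with blocks $[H_{z_1}^T\otimes I_{m},\,H_{x_1}^T\otimes I_{m}]$ for $m=m_4$ and $m=m_3$ (after regrouping the columns of $\mathbf{e}_{xy},\mathbf{e}_v$ and of $\mathbf{e}_{hc},\mathbf{e}_{zw}$). $P$ is block diagonal with matching blocks $N_1\otimes I_m$. It therefore suffices to show, for a single $m$,
\[
\operatorname{Im}\bigl([H_{z_1}^T,\,H_{x_1}^T]\otimes I_m\bigr)=\ker(N_1\otimes I_m).
\]
The column span of $[H_{z_1}^T,H_{x_1}^T]$ equals the column span of $M^T=[H_{x_1}^T,H_{z_1}^T]$, since reordering columns does not change the image. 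Moreover $[H_{z_1}^T,H_{x_1}^T]\otimes I_m$ has the same image as $[H_{z_1}^T\otimes I_m,\,H_{x_1}^T\otimes I_m]$, because the two matrices differ only by a column permutation. The statement therefore reduces to $\operatorname{Im}(M^T\otimes I_m)=\ker(N_1\otimes I_m)$.

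The second step proves the unreplicated identity $\operatorname{Im}(M^T)=\ker(N_1)$ over $\mathbb{F}_2$. Since the rows of $N_1$ lie in $\ker M$, we have $MN_1^T=0$, hence $N_1M^T=0$ and $\operatorname{Im}(M^T)\subseteq\ker(N_1)$. For equality, count dimensions. We have $\dim\operatorname{Im}(M^T)=\operatorname{rank}M$. Because the rows of $N_1$ form a basis of $\ker M$, $\operatorname{rank}N_1=n_A-\operatorname{rank}M$, so $\dim\ker N_1=n_A-\operatorname{rank}N_1=\operatorname{rank}M$. This counting argument does not need the standard-inner-product complement to be non-degenerate, which matters over $\mathbb{F}_2$; only rank–nullity is used.

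The third step lifts the identity to the Kronecker product with $I_m$. Here $PA=0$ follows from $(N_1\otimes I)(M^T\otimes I)=N_1M^T\otimes I=0$. For the dimensions, $\operatorname{rank}(X\otimes I_m)=m\operatorname{rank}X$, so $\dim\operatorname{Im}(M^T\otimes I_m)=m\operatorname{rank}M$ and $\dim\ker(N_1\otimes I_m)=m n_A-m\operatorname{rank}N_1=m\operatorname{rank}M$. Inclusion together with equal dimension gives equality. Summing over the two diagonal blocks yields $\operatorname{Im}(A)=\ker(P)$.

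The main obstacle is getting this kernel–image identity exactly right over $\mathbb{F}_2$. One must avoid relying on orthogonal complements, and must check that $N_1$ has full row rank with rank $n_A-\operatorname{rank}M$; both follow from its rows being a basis of $\ker M$. Everything else is bookkeeping of block structure and column orderings.

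With the identity in hand, the two directions go as follows. If $(\mathbf{e}_0,\mathbf{e}_{mix})$ solves Eq.~\eqref{primal equation}, multiplying by $P$ and using $PA=0$ gives $PC\mathbf{e}_{mix}=P\mathbf{s}$. Conversely, if $PC\mathbf{e}_{mix}=P\mathbf{s}$, then $\mathbf{s}-C\mathbf{e}_{mix}\in\ker P=\operatorname{Im}A$, so some $\mathbf{e}_0$ satisfies $A\mathbf{e}_0=\mathbf{s}-C\mathbf{e}_{mix}$. This is exactly a solution of Eq.~\eqref{primal equation}.
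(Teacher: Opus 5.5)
Your proposal is correct and follows the paper's proof: $PA=0$ gives the forward direction, and the identity $\ker P=\operatorname{Im}A$ gives the converse via the residual syndrome $\mathbf{s}+C\mathbf{e}_{mix}$; that identity is reduced blockwise to $\ker N_1=\operatorname{Im}M^{T}$ and then lifted through $\otimes I_m$. The only difference is that you prove $\ker N_1=\operatorname{Im}M^{T}$ by inclusion plus rank--nullity and justify the Kronecker lift explicitly, whereas the paper uses the chain $\ker N_1=\operatorname{row}(N_1)^{\perp}=(\ker M)^{\perp}=\operatorname{row}(M)$ and merely asserts the lift; that chain is equally valid over $\mathbb{F}_2$ because the standard dot product is nondegenerate there (what can fail is $W\cap W^{\perp}=0$, which neither argument needs), so your caution on that point is harmless but unnecessary.
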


\begin{proof}
	Since the rows of $N_1$ form a basis for $\ker
	\begin{bmatrix}
		H_{x_1}\\
		H_{z_1}
	\end{bmatrix}$,
	we have $N_1H_{x_1}^{T}=\mathbf{0}$ and $N_1H_{z_1}^{T}=\mathbf{0}$. It follows immediately that $PA=\mathbf{0}$.
	
	Suppose first that Eq.~\eqref{primal equation} is solvable.
	Left-multiplying both sides by $P$ gives
	\begin{equation}
		PA\mathbf{e}_0+PC\mathbf{e}_{mix}
		=
		P
		\begin{bmatrix}
			\mathbf{s}_1\\
			\mathbf{s}_2
		\end{bmatrix}.
	\end{equation}
	Since $PA=\mathbf{0}$, it follows that
	\begin{equation}
		PC\mathbf{e}_{mix}
		=
		P
		\begin{bmatrix}
			\mathbf{s}_1\\
			\mathbf{s}_2
		\end{bmatrix},
	\end{equation}
	which proves that the projected equation is solvable.
	
	Conversely, suppose that the projected equation admits a solution
	$\mathbf{e}_{mix}$. Fix such a solution and define the residual syndrome
	by
	\begin{equation}
		\mathbf{s}^{\prime\prime}
		=
		\mathbf{s}+C\mathbf{e}_{mix},
	\end{equation}
	where $\mathbf{s}
		=
		\begin{bmatrix}
			\mathbf{s}_1\\
			\mathbf{s}_2
		\end{bmatrix}$.
	Using the projected equation, we obtain
	\begin{equation}
		P\mathbf{s}^{\prime\prime}
		=
		P\left(\mathbf{s}+C\mathbf{e}_{mix}\right)
		=
		\mathbf{0}.
	\end{equation}
	Hence, $\mathbf{s}^{\prime\prime}\in\ker P$.
	
	It remains to show that $\ker P=\operatorname{Im}A$. By construction, $\operatorname{row}(N_1)=\ker M$. Therefore, we have
	\begin{equation}
	\begin{aligned}
		\ker N_1
		&=
		\operatorname{row}(N_1)^{\perp} \\
		&=
		(\ker M)^{\perp} \\
		&=
		\operatorname{row}(M) \\
		&=
		\operatorname{Im}(M^T) \\
		&=
		\operatorname{Im}
		\left[
		H_{x_1}^{T},
		H_{z_1}^{T}
		\right].
	\end{aligned}
\end{equation}
	Since the order of the two column blocks does not affect their image,
	\begin{equation}
		\ker N_1
		=
		\operatorname{Im}
		\left[
		H_{z_1}^{T},
		H_{x_1}^{T}
		\right].
	\end{equation}
	It follows that
	\begin{equation}
		\ker
		\left(
		N_1\otimes I_{m_4}
		\right)
		=
		\operatorname{Im}
		\left[
		H_{z_1}^{T}\otimes I_{m_4},
		H_{x_1}^{T}\otimes I_{m_4}
		\right],
	\end{equation}
	and
	\begin{equation}
		\ker
		\left(
		N_1\otimes I_{m_3}
		\right)
		=
		\operatorname{Im}
		\left[
		H_{z_1}^{T}\otimes I_{m_3},
		H_{x_1}^{T}\otimes I_{m_3}
		\right].
	\end{equation}
	Therefore, from the block-diagonal structure of $P$ and the corresponding
	block structure of $A$, we obtain
	\begin{equation}
		\ker P=\operatorname{Im}A.
	\end{equation}
	
	Since $\mathbf{s}^{\prime\prime}\in\ker P$, it follows that
	$\mathbf{s}^{\prime\prime}\in\operatorname{Im}A$. Hence, there exists
	an error vector
	\begin{equation}
		\mathbf{e}_0=
		\left(
		\mathbf{e}_{xy},
		\mathbf{e}_{hc},
		\mathbf{e}_{v},
		\mathbf{e}_{zw}
		\right)^T
	\end{equation}
	such that
	\begin{equation}
		A\mathbf{e}_0=\mathbf{s}^{\prime\prime}.
	\end{equation}
	By the definition of $\mathbf{s}^{\prime\prime}$ and the fact that the
	calculations are performed over $\mathbb{F}_2$, we obtain
	\begin{equation}
		A\mathbf{e}_0+C\mathbf{e}_{mix}
		=
		\begin{bmatrix}
			\mathbf{s}_1\\
			\mathbf{s}_2
		\end{bmatrix}.
	\end{equation}
	Thus, Eq.~\eqref{primal equation} is solvable, which completes the proof.
\end{proof}

\textbf{Proposition}~\ref{projection equivalence} shows that the mixed-face error component $\mathbf{e}_{mix}$ can first be determined independently by
solving the projected equation. Once $\mathbf{e}_{mix}$ is determined,
the residual syndrome necessarily lies in $\operatorname{Im}(A)$.
Consequently, the remaining error components $\mathbf{e}_{xy}$, $\mathbf{e}_{hc}$, $\mathbf{e}_{v}$, and $\mathbf{e}_{zw}$ can be recovered independently and in parallel on a collection of mutually independent $xy$-planes. This constitutes the central idea of the second decoding layer, namely, \textbf{decoupling the original decoding problem by first solving for $\mathbf{e}_{mix}$}.

Obtaining a particular solution to the projected equation in
Eq.~\eqref{projection equation} is straightforward. For example, Gaussian
elimination can be used to directly obtain a particular solution
$\hat{\mathbf{e}}_{mix}$ satisfying the projected equation. However, such a
solution is generally not the most likely error estimate from a probabilistic
perspective. Gaussian elimination accounts only for the algebraic
solvability of the linear system and does not incorporate information about
the physical noise model or the posterior error probabilities of individual
qubits. Consequently, the resulting solution may have a relatively large
Hamming weight or may assign errors to qubits that are unlikely to be
erroneous, potentially leading to a higher logical error rate.

To address this issue, we incorporate an ordered-statistics-decoding
(OSD)-based post-processing strategy when solving the projected equation.
Specifically, the original syndrome is first passed to the BP decoder to
obtain the a posteriori log-likelihood ratios (LLRs) of all physical qubits.
The LLRs associated with the mixed-face qubits are then extracted and used
to rank the columns of the projected matrix according to their reliabilities.
This reliability ordering allows columns associated with less reliable, and
hence more plausible, error locations to be preferentially selected.

According to \textbf{Proposition}~\ref{projection equivalence}, define
\begin{equation}
	P_{mix}
	=
	\begin{bmatrix}
		N_1\otimes H_{x_2}\\
		N_1\otimes H_{z_2}
	\end{bmatrix}.
\end{equation}
Suppose that $\operatorname{rank}(P_{mix})=r$. According to the reliability
ordering described above, the first $r$ linearly independent columns are
selected from $P_{mix}$ to form a column basis. Since these columns span
$\operatorname{Im}(P_{mix})$ and the projected syndrome is guaranteed to
belong to $\operatorname{Im}(P_{mix})$, a particular solution
$\hat{\mathbf{e}}_{mix}$ can be obtained using this column basis. Compared
with direct Gaussian elimination without reliability information, this
strategy preferentially constructs solutions using qubits identified by BP
as less reliable, thereby providing a more plausible estimate of the
mixed-face error component. The corresponding decoding procedure is given in Algorithm \ref{alg:osd_projection}.

It should be emphasized that, although the proposed method adopts the basic
idea of OSD \cite{Fossorier1995OSD} post-processing, OSD is not applied to the full parity-check
matrix $H_x$. Instead, the reliability-based column ordering and basis
selection are performed only on the substantially smaller projected matrix
$P_{mix}$. Consequently, the computational cost is significantly lower than
that of directly applying OSD to the full matrix $H_x$.

Once $\hat{\mathbf{e}}_{mix}$ has been obtained, the syndrome contribution
of the mixed-face errors can be removed from the original syndrome to form
the residual syndrome. According to \textbf{Proposition}~\ref{projection equivalence}, the residual syndrome necessarily lies in the image of the matrix associated with the remaining error components. Therefore, the remaining decoding problem can be decomposed into a collection of mutually independent and parallelizable toric-code-like decoding problems defined on 2D $xy$-planes. 

\begin{algorithm}[htbp]
	\caption{Sovling projection equation based on OSD strategy}
	\label{alg:osd_projection}
	\LinesNumbered
	\KwIn{Original syndrome vector $\textbf{s}=(\textbf{s}_1,\textbf{s}_2)^T$;
	$X$-type parity-check matrix $H_x$;
	Projection matrix $P$;
	Mixed-face projection matrix $P_{mix}$ and its rank $r$}
	\KwOut{Mixed-face error component $\hat{\textbf{e}}_{mix}$}
	
	LLR = BP(\textbf{s});\\
	Computing projected syndrome vector $\textbf{s}_{proj} = P\textbf{s}$;\\
	Selecting the indexes $\mathcal{I}$ of the first $r$ linearly independent columns from $P_{mix}$ according to LLR;\\
	Constructing matrix $U=P_{mix}[\mathcal{I}]$;\\
	Computing $\hat{\textbf{e}}_{mix}^{\mathcal{I}}=U^{-1}\textbf{s}_{proj}$;\\
	$\hat{\textbf{e}}_{mix} = \left(\textbf{s}_{proj},\textbf{0}\right)$
\end{algorithm}

\subsection{Numerical simulation}
\label{simulation result}
This section presents the error-correction performance of the two-layer decoder for isotropic 4D Chamon codes. Since we consider the CSS variant of the 4D Chamon code in this paper, its $X$-type and $Z$-type stabilizer structures are dual to each other, and the corresponding decoding procedures can be carried out independently. Therefore, it is sufficient to consider only one type of error, as the decoding procedure for the other type is completely analogous. Fig.~\ref{BPOSD_compare_DecoupleDecoder} compares the decoding performance of the proposed two-layer decoder with that of BP-OSD for 4D Chamon codes of the same sizes under pure Pauli $Z$ noise. The solid curves represent the logical error rates achieved by the two-layer decoder, whereas the dashed curves correspond to those obtained by BP-OSD. As can be seen, the two-layer decoder consistently achieves lower logical error rates and exhibits a code-capacity threshold of approximately $7\%$. Moreover, the performance gap between the two decoders becomes increasingly pronounced as the system size grows. It is also worth noting that the decoding performance of BP-OSD degrades with increasing system size. This indicates that, although BP-OSD is a general-purpose decoder, it is not well suited to exploiting the specific structure of the 4D Chamon code.
\begin{figure}[htbp]
	\centering
	\includegraphics[width=0.48\textwidth]{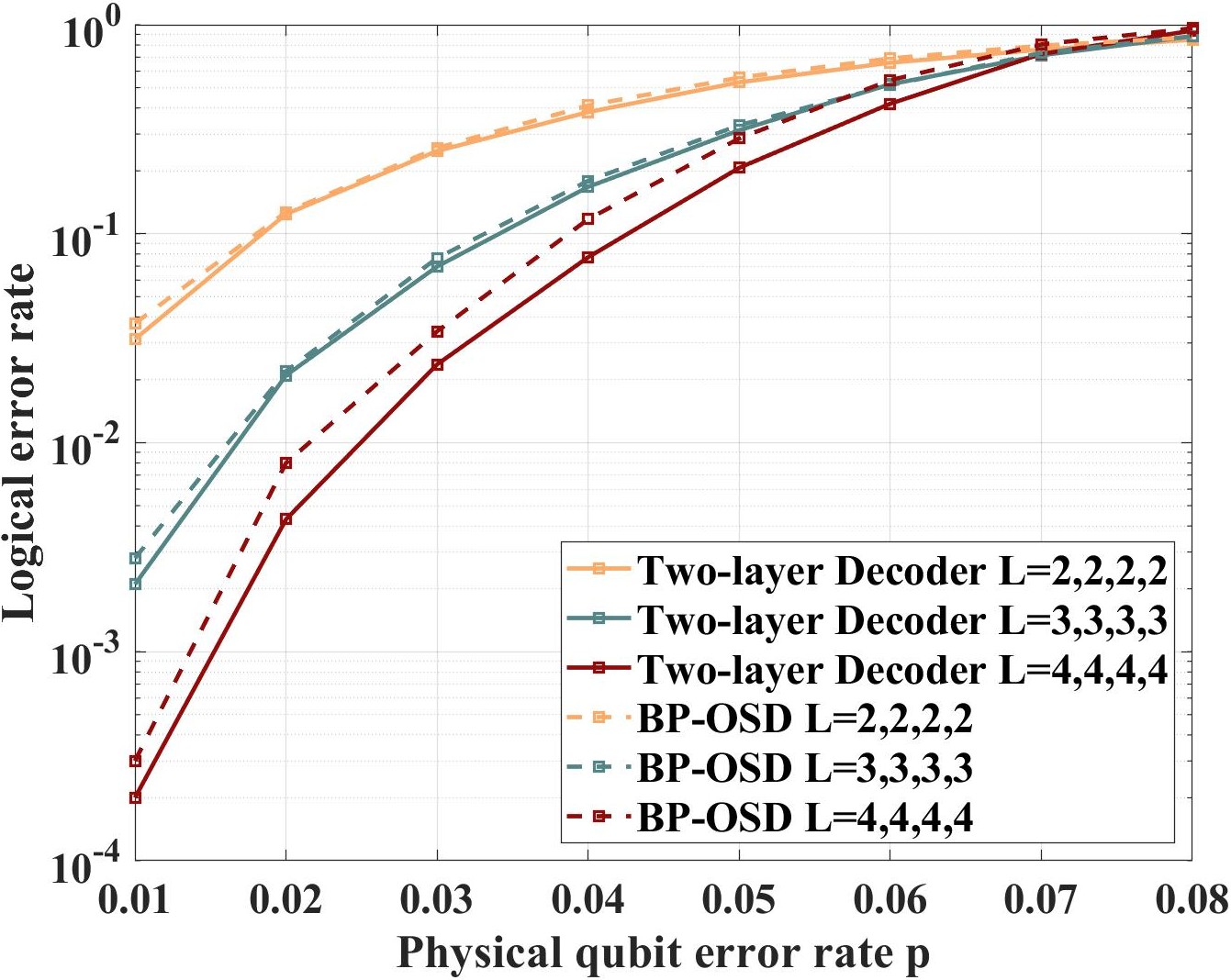}
	\caption{Comparison between the two-layer decoder and BP-OSD on the 4D Chamon code.}
	\label{BPOSD_compare_DecoupleDecoder}
\end{figure}

\section{Conclusion}
\label{conclusion}
In this paper, we study the 4D Chamon code from three perspectives: its geometric structure, excitations, and decoding strategy. First, we established the correspondence between the algebraic structure of the 4D Chamon code and the geometry of the 4D lattice, and explicitly identified the geometric distributions of physical qubits and stabilizers. Second, Building on this correspondence, we further investigated the properties and types of excitations supported by the 4D Chamon code and found that they exhibit restricted mobility, which is a key characteristic of fracton models. Moreover, the 4D Chamon code supports the same three types of excitations as the 3D Chamon code, namely, monopoles, dipoles, and quadrupoles. In both the 4D and 3D Chamon codes, monopoles and dipoles are generated by membrane operators and rigid string operators, respectively. By contrast, quadrupoles in the 4D Chamon code are generated by flexible bilayer ribbon operators, which exhibit a structure analogous to the flexible bilayer string operators that generate quadrupoles in the 3D Chamon code. These results show that the 4D Chamon code is not only a new 4D fracton model, but can also be naturally understood as a 4D generalization of the 3D Chamon code. Third, we further uncover two properties relevant to decoding: a hyperplane symmetry and a projection-induced 2D toric-code structure, which provids a new geometric perspective on the error-correction mechanism of the code. Finally, by exploiting these geometric features, we propose a two layer decoding strategy for the 4D Chamon code. Numerical simulations demonstrate that the proposed decoder achieves substantially higher decoding accuracy than the general-purpose BP-OSD decoder. Our work not only clarify the characteristics of the 4D Chamon code as a higher-dimensional fracton model, but also provide a new approach to efficient decoding for this class of higher-dimensional quantum codes.

\section*{Data Availability}
The data that support the findings of this study are available from the corresponding author upon reasonable request.

\section*{Acknowledgements}
This work is supported by the Colleges and Universities Stable Support Project of Shenzhen, China (No.GXWD20220817164856008), Shenzhen Science and Technology Program, China (JCYJ20241202123906009), Guangdong Provincial Key Laboratory of Novel Security Intelligence Technologies (2022B1212010005), the Colleges and Universities Stable Support Project of Shenzhen, China (No.GXWD20220811170225001) and Harbin Institute of Technology, Shenzhen - SpinQ quantum information Joint Research Center Project (No.HITSZ20230111).

\appendix

\newpage
\bibliography{sn-bibliography}

\end{document}